\newcommand{\dd}{\partial}
\newcommand{\be}{\begin{equation}}
\newcommand{\ee}{\end{equation}}
\newcommand{\ra}{\rightarrow}
\newcommand{\RR}{\mathbb{R}}
\newcommand{\ZZ}{\mathbb{Z}}
\newcommand{\EE}{\mathcal{E}}
\newcommand{\DD}{\mathcal{D}}
\newcommand{\CC}{\mathcal{C}}
\newcommand{\n}{m}
\newcommand{\E}{E_2}
\newcommand{\mr}{\mathrm}
\newcommand{\bt}{\bullet}
\newcommand{\hra}{\hookrightarrow}
\newcommand{\glob}{\mr{glob}}
\newcommand{\In}{\mr{in}}
\newcommand{\Out}{\mr{out}}
\newcommand{\uF}{\underline{F}}
\newcommand{\uC}{\underline{C}}
\newtheorem{thm}{Theorem}[section]
\newtheorem*{thm*}{Theorem}
\newtheorem{lemma}[thm]{Lemma}
\newtheorem{Conj}[thm]{Conjecture}
\newtheorem{remark}[thm]{Remark}
\newtheorem{proposition}[thm]{Proposition}
\begin{document}
\title
{
Wave relations
}

\author[A.~S.~Cattaneo]{Alberto~S.~Cattaneo}
\address{Institut f\"ur Mathematik, Universit\"at Z\"urich,
Winterthurerstrasse 190, CH-8057 Z\"urich, Switzerland}
\email{alberto.cattaneo@math.uzh.ch}

\author[P. Mnev]{Pavel Mnev}
\address{Institut f\"ur Mathematik, Universit\"at Z\"urich,
Winterthurerstrasse 190, CH-8057 Z\"urich, Switzerland}
\email{pmnev@pdmi.ras.ru}

\thanks{A.~S.~C. acknowledges partial support of SNF Grant No.~200020-131813/1.
P.~M. acknowledges partial support of RFBR Grant
No.
13-01-12405-ofi-m-2013
and of SNF Grant No.~200021-137595.}

\begin{abstract}
The wave equation (free boson) problem 
is studied from the viewpoint of the relations on the symplectic manifolds associated to the boundary
induced by solutions. Unexpectedly there is still something to say on this simple, well-studied problem.
In particular, boundaries which do not allow for a meaningful Hamiltonian evolution are not problematic from the viewpoint of relations.
In the two-dimensional Minkowski case, these relations are shown to be Lagrangian. This result is then extended to a wide class of metrics and is conjectured to be true also
in higher dimensions for nice enough metrics. A counterexample where the relation is not Lagrangian is provided by the Misner space.
%
%
\end{abstract}

\maketitle

\tableofcontents

\section{Introduction}
In this note we study the wave equation from the point of view of evolution relations (as defined in \cite{CMR, CMR2}). In particular we show that
they are well behaved also in cases when the boundary does not allow for a meaningful Hamiltonian evolution. This is a case study for a simple
well-studied problem (on which unexpectedly there was still something to say) supporting the relevance of the evolution relation approach.
This note is self-contained and the relevant concepts from \cite{CMR, CMR2} are introduced when needed.

Fix a dimension $\n$ and a signature. To an $\n$-dimensional 
compact oriented pseudo-Riemannian manifold $(M,g)$, possibly with boundary, whose metric has the given signature, we associate a space of fields\footnote{We mainly consider smooth functions in this note, which requires working in the setting of Fr\'echet spaces. For less regularity, and the
corresponding Banach setting, see subsubsection~\ref{ss:BvF}.
}
$F_M:=\CC^\infty(M)$ and an action functional
\be \label{S}
S_{M,g}[\phi]:=\frac12\int_M  d\phi\wedge*_g d\phi,\qquad\phi\in F_M,
\ee
where $*_g$ denotes the Hodge-$*$ operator induced by the metric $g$. More explicitly, writing the integrand in a local chart,
\[
S_{M,g}[\phi]=\frac12\int_M g^{\mu\nu}\,\dd_\mu\phi\,\dd_\nu\phi\,\sqrt g\; d^\n x.
\]
where $g=|\det (g_{\mu\nu})|$.
According to the construction in \cite{CMR2}, to an $(\n-1)$-manifold $\Sigma$ (with the extra structure of a function, a vector field and a volume form) we can associate a space of boundary fields\footnote{Notice that $F_M$ and $\Phi_\Sigma$ are Fr\'echet spaces and hence Fr\'echet manifolds.} (or phase space) $\Phi_\Sigma$ endowed with
a symplectic structure $\omega_\Sigma=\delta\alpha_\Sigma$, where $\alpha_\Sigma$ is the 1-form on $\Phi_\Sigma$ arising as the boundary term of the variation of the action (\ref{S}), such that for every $M$ as above we get an epimorphism (and hence a surjective submersion)
$\pi_M: F_M\to \Phi_{\dd M}$; moreover, $L_{M}:=\pi_M(EL_M)$ is isotropic in
$(\Phi_{\dd M},\omega_{\dd M})$,
where $EL_M$ is the subset of solutions to the  Euler--Lagrange (EL) equation $d*_g d \phi=0$ or in local coordinates:
\be\label{wave eq covariant}
\dd_\mu(\sqrt g\, g^{\mu\nu}\dd_\nu\phi)=0.
\ee
\begin{Conj}\label{Conj}
For any compact oriented pseudo-Riemannian manifold $(M,g)$ with boundary, which can be isometrically embedded into some Euclidean space $\RR^N$ equipped with constant metric, the subspace $L_M\subset \Phi_{\dd M}$ is Lagrangian.
\end{Conj}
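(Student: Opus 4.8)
Since the action (\ref{S}) is quadratic, the whole problem is linear: $EL_M$ is a vector subspace of $F_M$, the map $\pi_M$ is linear, and so $L_M=\pi_M(EL_M)$ is a linear subspace of the symplectic vector space $(\Phi_{\dd M},\omega_{\dd M})$ that coincides with each of its own tangent spaces. Hence ``$L_M$ is Lagrangian'' means exactly $L_M=L_M^\perp$. The inclusion $L_M\subseteq L_M^\perp$ is the isotropy already established, so the entire conjecture reduces to the reverse inclusion $L_M^\perp\subseteq L_M$. Writing a point of $L_M$ as the Cauchy datum $(\phi|_{\dd M},(*_g d\phi)|_{\dd M})$ of a solution $\phi\in EL_M$, the symplectic pairing of two such data is
\[
\omega_{\dd M}(\pi_M\phi_1,\pi_M\phi_2)=\int_{\dd M}\big(\phi_2\,(*_g d\phi_1)-\phi_1\,(*_g d\phi_2)\big),
\]
whose vanishing is Green's identity for $d*_g d$, reflecting the bulk symmetry of $\int_M d\phi_1\wedge *_g d\phi_2$.

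The plan is to realize an arbitrary element of $L_M^\perp$ as the Cauchy datum of a genuine solution. Fix $(\psi,q)\in L_M^\perp$ and pick any smooth extension $\chi\in F_M$ with $\chi|_{\dd M}=\psi$ and $(*_g d\chi)|_{\dd M}=q$; such a $\chi$ exists because the value and the conormal derivative along $\dd M$ may be prescribed freely in a collar. Applying Green's formula to $\chi$ against an arbitrary $\phi\in EL_M$ gives
\[
\int_M \phi\,(d*_g d\chi)=\int_{\dd M}\big(\phi\,q-\psi\,(*_g d\phi)\big)=\omega_{\dd M}\big((\psi,q),\pi_M\phi\big)=0 ,
\]
the last equality being the hypothesis $(\psi,q)\in L_M^\perp$. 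Thus the source $J:=d*_g d\chi\in\Omega^{\n}(M)$ is orthogonal to every homogeneous solution.

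First I would then reduce the conjecture to the following solvability statement: \emph{if $J\in\Omega^{\n}(M)$ satisfies $\int_M\phi\,J=0$ for all $\phi\in EL_M$, then there is $\rho\in F_M$ with $d*_g d\rho=J$ and vanishing Cauchy datum $(\rho|_{\dd M},(*_g d\rho)|_{\dd M})=(0,0)$.} Indeed $\chi-\rho$ is then a solution whose Cauchy datum is still $(\psi,q)$, so $(\psi,q)\in L_M$ and the argument closes. The orthogonality of $J$ to $EL_M$ is precisely the \emph{necessary} compatibility condition for this overdetermined problem (two boundary conditions for a second-order operator), obtained by running Green's formula backwards against $\rho$; the content of the statement is that it is also sufficient.

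The hard part is this solvability lemma, and it is where the hypothesis on $(M,g)$ must enter. In the Riemannian (elliptic) case it follows from the classical fact that $L_M$ is the graph of the Dirichlet-to-Neumann operator, which is symmetric, so that the compatibility condition exactly fills the cokernel of the Dirichlet problem; the flat embedding is then harmless, since every compact Riemannian manifold admits one. In the genuinely pseudo-Riemannian case the Dirichlet problem is ill posed and the corrective field $\rho$ must instead be produced by a global representation formula, for which I would use a fundamental solution of the ambient constant-coefficient operator on $\RR^N$ together with layer-type potentials supported on a thickening of the embedded $M$, arranging the free data so that both boundary traces of $\rho$ vanish. The role of the flat embedding is precisely to supply such a globally controlled parametrix and, just as importantly, to exclude the causal pathologies that defeat the lemma: in Misner space the identifications obstruct the solvability of the corrective problem, so that $J\perp EL_M$ no longer forces the existence of $\rho$ and $L_M$ fails to be coisotropic. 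Establishing the lemma uniformly under the embedding hypothesis, rather than case by case as for two-dimensional Minkowski space, is the main obstacle and the reason the statement is posed as a conjecture.
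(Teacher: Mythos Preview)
The statement you are attempting is labelled \emph{Conjecture} in the paper and is not proved there in full generality; the paper only establishes the special cases listed immediately after the statement (Riemannian signature via the Dirichlet problem, products $\Sigma\times I$ via the Cauchy problem) and the two-dimensional Minkowski case (Theorem~\ref{thm}). So there is no ``paper's own proof'' of Conjecture~\ref{Conj} to compare against, and your write-up is---correctly and explicitly---not a proof either: you reduce everything to the solvability lemma (existence of $\rho$ with $d*_g d\rho=J$ and vanishing Cauchy data whenever $J\perp EL_M$) and then acknowledge that you cannot establish it.

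Your reduction is sound and coincides with what the paper records in the subsection on free field theories and Lefschetz duality in Section~\ref{sec: remarks}: in that language, the Lagrangian property of $L_M$ is equivalent to nondegeneracy in the second slot of the pairing
\[
EL_M\otimes \frac{\Omega^{\n}(M)}{\{d*_g d\phi:\pi_M(\phi)=0\}}\longrightarrow\RR,
\]
which is exactly your assertion that $J\perp EL_M$ forces $J\in\{d*_g d\rho:\pi_M(\rho)=0\}$. So at the level of reformulation you and the paper agree, and the paper is equally candid that this is where the difficulty lies.

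Where the paper departs from your outline is in the one nontrivial case it actually proves, the two-dimensional Minkowski domains of Theorem~\ref{thm}. There the argument does \emph{not} attempt to solve the corrective problem for $\rho$ via any ambient parametrix. Instead it exploits the explicit factorisation $\phi=F(\sigma_+)+G(\sigma_-)$ of solutions, encodes boundary data through two light-ray involutions $E_\pm$ on $\gamma=\dd D$, introduces the intermediate subspace $L^{\glob}\subset L$, computes $\dim\bigl((L^{\glob})^\perp/L^{\glob}\bigr)=2(N-1)$ by an explicit period calculation, and finishes with a finite-dimensional dimension count inside that quotient. None of this resembles a layer-potential construction in an ambient $\RR^N$. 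Your proposed route is a different and, as you say, undeveloped program; the phrase ``arranging the free data so that both boundary traces of $\rho$ vanish'' is precisely the step that would need a genuinely new idea, and neither you nor the paper supplies one.
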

The Conjecture is easily proved in the case of Riemannian manifolds \cite{CMR} from existence and uniqueness of solutions to the Dirichlet boundary problem $d*d \phi =0$, $\phi|_{\dd M}=\phi_\dd$ where $\phi_{\dd}\in \CC^\infty(\dd M)$ is a boundary condition for $\phi$. 

The Conjecture is also true if $M$ is a pseudo-Riemannian manifold of the form $\Sigma\times I$, where $I=[t_0,t_1]$ is an interval,
provided that the metric $g$ pulls back to a non-degenerate metric $g_t$ on $\Sigma\times \{t\}$ for any $t\in I$.
This follows from
existence and uniqueness for the initial value problem for the EL equation, with Cauchy data being a point of $\Phi_{\Sigma\times\{t_0\}}$, implying that $L_M$ is the graph of an isomorphism $\Phi_{\Sigma\times\{t_0\}}\ra\Phi_{\Sigma\times\{t_1\}}$. On the other hand, $L_M$ is isotropic by a universal argument of \cite{CMR, CMR2}, hence $L_M$ is Lagrangian.

If the boundary of $M$ is split into ``incoming'' and ``outgoing'' parts $\dd M=(\dd_\In M)^\mr{op}\sqcup \dd_\Out M$, then the subspace $L_M\subset \Phi_{\dd M}=\bar\Phi_{\dd_\In M}\times \Phi_{\dd_\Out M}$ can be interpreted as a set-theoretic relation between $\Phi_{\dd_\In M}$ and $\Phi_{\dd_\Out M}$ -- the ``evolution relation''; the property that $L_M$ is Lagrangian means that the evolution relation is a canonical relation between symplectic spaces. Here ``op'' stands for reversing the orientation and bar stands for changing the sign of the symplectic form. Gluing of manifolds along common boundary corresponds in this setting to set-theoretical composition of relations.

In this note, see Section~\ref{sec: compact}, Theorem \ref{thm}, we prove
the following case of the Conjecture.
\begin{thm*}\label{t:t}
Let $\n=2$, 
$M$ a compact domain with smooth boundary in the Minkowski plane, such that there are only finitely many boundary points with light-like tangent and such that the curvature of the boundary is nonzero at these points, and let $g$ be the Minkowski metric restricted to $M$. Then $L_M$ is Lagrangian.
\end{thm*}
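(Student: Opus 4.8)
The plan is to pass to null coordinates $u,v$ in which the Minkowski metric is a constant multiple of $du\,dv$, so that the null $1$-forms diagonalize the Hodge star, $*_g du=-du$ and $*_g dv=dv$, and the Euler--Lagrange equation \eqref{wave eq covariant} becomes $\dd_u\dd_v\phi=0$, whose solutions are locally the d'Alembert waves $\phi=f(u)+h(v)$. I would encode the boundary data of a solution by the two \emph{chiral} $1$-forms $\beta_u:=\iota^*(\dd_u\phi\,du)$ and $\beta_v:=\iota^*(\dd_v\phi\,dv)$ on $\dd M$, where $\iota:\dd M\hra M$; on solutions these are pullbacks of closed $1$-forms on $M$ proportional to $du$ and $dv$, and one recovers $\bigl(\phi|_{\dd M},(*_g d\phi)|_{\dd M}\bigr)$ through $d(\phi|_{\dd M})=\beta_u+\beta_v$ and $(*_g d\phi)|_{\dd M}=\beta_v-\beta_u$. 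Since $L_M$ is already isotropic by the universal argument of \cite{CMR,CMR2} (equivalently, Green's identity for $d*_g d$), it suffices to prove that $L_M$ is coisotropic, i.e.\ that every boundary field symplectically orthogonal to all of $L_M$ is itself the restriction of a solution.

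First I would establish a chiral decoupling. For a solution the closed $1$-form $\dd_u\phi\,du$ is constant along the leaves of the null foliation $\mathcal F_u$ of $M$ by the lines $u=\mathrm{const}$, hence descends to the leaf space $Q_u:=M/\mathcal F_u$, and symmetrically for $v$; thus $L_M$ is parametrized by pairs of functions on $Q_u$ and $Q_v$ compatible through $d\phi$. Writing an arbitrary boundary field as $(\psi,q)$ with associated $1$-forms $\gamma_u,\gamma_v$ built as above, the symplectic pairing of $(\psi,q)$ with the data of an arbitrary solution integrates by parts on the closed curve $\dd M$; using $d\psi=\gamma_u+\gamma_v$ all mixed terms cancel (this is exactly the cancellation that yields isotropy), and the pairing splits into independent $u$- and $v$-contributions. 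Orthogonality to all solutions then forces the leafwise pushforwards $(\pi_u)_*\gamma_u=0$ on $Q_u$ and $(\pi_v)_*\gamma_v=0$ on $Q_v$, while membership $(\psi,q)\in L_M$ is precisely the \emph{descent} of $\gamma_u,\gamma_v$ to $Q_u,Q_v$. The statement thus reduces to the implication: vanishing leafwise pushforward $\Rightarrow$ smooth descent.

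The main obstacle is the local analysis at the light-like tangent points, which are exactly the singular leaves of the foliations, i.e.\ the critical points of $u|_{\dd M}$ and $v|_{\dd M}$; away from them the projection is a submersion and the equivalence of vanishing pushforward with descent is elementary. Near such a point the boundary is tangent to a null line, so $u|_{\dd M}$ has a critical point at some $s_0$, and the hypothesis of nonzero curvature makes it \emph{nondegenerate}, $u(s)-u_*\sim c\,(s-s_0)^2$ with $c\neq0$: a quadratic fold at which two branches of $\dd M$ over the same $u$-values merge and $\dot u\sim\pm\sqrt{u-u_*}$. Vanishing of the pushforward forces the two merging branch-values of the ratio $\gamma_u/du$ to agree; combined with smoothness of $\gamma_u$ this forces $\gamma_u$ to vanish to first order at $s_0$ and to be even across the fold, which is exactly the condition for $\gamma_u/du$ to extend to a smooth function of $u$ through the fold value, i.e.\ for $\gamma_u$ to descend smoothly across the corresponding singular leaf of $Q_u$. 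This evenness-and-smoothness matching is what a quadratic fold guarantees and what a degenerate (zero-curvature) fold would destroy.

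Finally I would assemble the local descent into a global one. Finiteness of the light-like tangent set together with nondegeneracy guarantees that $Q_u$ and $Q_v$ are finite graphs (reducing to closed intervals in the convex case), the fold analysis providing smooth descent across every vertex and the submersion argument providing it along every edge; this also governs the non-convex situation, where the global d'Alembert splitting on $M$ genuinely fails and one must track how the null leaves reconnect through the folds. Patching these yields $\gamma_u,\gamma_v$ descending to $Q_u,Q_v$, hence $(\psi,q)\in L_M$, establishing coisotropy and therefore that $L_M$ is Lagrangian. I expect the real difficulty to sit in the fold regularity of the previous paragraph, the global graph combinatorics being bookkeeping; the indispensability of good foliation and causal behaviour is precisely what the Misner counterexample shows is lost when these hypotheses fail.
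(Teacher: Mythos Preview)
Your chiral decomposition via the null $1$-forms $\gamma_u,\gamma_v$ is exactly the paper's map $\rho:(\phi,\phi_n)\mapsto(\alpha,\beta)$, and your identification of $L_M^\perp$ with the $E_\pm$-invariance (equivalently, vanishing leafwise pushforward) condition matches the paper's computation of $(L^{\mathrm{glob}})^\perp$. For \emph{simply connected} $M$ your argument is complete and essentially coincides with the paper's; your fold analysis at the light-like points is a more explicit version of what the paper records as the surjectivity of $\{F\in \mathcal C^\infty(D)\mid \partial_\pm F=0\}\to \mathcal C^\infty(\gamma)^{E_\pm}$, for which assumption~(C) is invoked.

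There is, however, a genuine gap when $M$ has $N\ge 2$ boundary components. Your assertion that ``membership $(\psi,q)\in L_M$ is precisely the descent of $\gamma_u,\gamma_v$ to $Q_u,Q_v$'' is false in that case. Descent of $\gamma_u,\gamma_v$ produces closed null-horizontal $1$-forms $\kappa,\lambda$ on $M$; their sum is exact (since $H^1(M)\hookrightarrow H^1(\partial M)$), so $\kappa+\lambda=d\tilde\psi$ for a solution $\tilde\psi$, but one only obtains $\tilde\psi|_{\partial M}=\psi+c$ with $c$ \emph{locally} constant on $\partial M$. A concrete counterexample on an annulus: take $\psi$ equal to $0$ on the inner circle and $1$ on the outer, with $q=0$. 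Then $\gamma_u=\gamma_v=0$ descend trivially, yet $(\psi,0)\notin L_M$, because the only solutions with $\kappa=\lambda=0$ are global constants. In fact $(\psi,0)$ lies in $(L^{\mathrm{glob}})^\perp$ but \emph{not} in $L_M$; your orthogonality argument only tests against d'Alembert-type solutions and therefore only lands you in $(L^{\mathrm{glob}})^\perp$, which strictly contains $L_M$ when $N\ge 2$. (You also conflate ``non-convex'' with ``global d'Alembert splitting fails''; the latter is a feature of non-simple-connectedness, not non-convexity.)

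The paper closes this gap by a finite-dimensional reduction you do not attempt: it shows $\dim\bigl((L^{\mathrm{glob}})^\perp/L^{\mathrm{glob}}\bigr)=2(N-1)$ via the period map to $H^1(\gamma)$, shows $\dim(L/L^{\mathrm{glob}})=N-1$ via the exact sequence $EL^{\mathrm{glob}}\hookrightarrow EL\twoheadrightarrow H^1(M)$, checks that the induced pairing on the quotient is symplectic, and then concludes by the fact that an $(N-1)$-dimensional isotropic subspace of a $2(N-1)$-dimensional symplectic space is Lagrangian. Your ``graph bookkeeping'' would need to be replaced by this (or an equivalent) cohomological dimension count to handle the locally-constant defect.
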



Notice that $\dd M$ generally has several space-like and several time-like pieces separated by 
light-like points.
A consequence of the theorem is that we can study the wave equation on compact domains: the appropriate boundary conditions consist in the
choice of an affine Lagrangian subspace $L'$ of $\Phi_{\dd M}$ that intersects $L_M$ in one point and on which
$\alpha_{\dd M}+\delta f$ vanishes for some local functional $f$ on $\Phi_{\dd M}$. This also means that quantization on compact domains
is possible, provided a suitable polarization of $\Phi_{\dd M}$ can be found.

It is tempting to extend the conjecture also to more general pseudo-Riemannian manifolds. This is expected to be the case for metrics that are nice enough, e.g., which do not differ much from the constant one (which is the case for small domains). However, this is not true in general and in subsection~\ref{ss-Misner} we show that the Misner space \cite{Mis} provides
a counterexample. From a different perspective, we may say that the condition that $L$ is Lagrangian selects reasonable spacetimes.

If $M$ is of the form $\Sigma\times [t_0,t_1]$, one can attempt to define the Hamiltonian evolution in the ``time'' parameter $t\in [t_0,t_1]$. The Hamiltonian $H$ can be constructed in a standard way via the Legendre transform of the time-density of the action. Generally, for $M$ pseudo-Riemannian and with no non-degeneracy condition on the metric pulled back to time-slices $\Sigma\times\{t\}$, $H$ will be singular and one can employ the Gotay-Nester-Hinds algorithm \cite{GNH} to construct a smaller phase space on which $H$ and the associated Hamiltonian vector field are well-defined. However, generally the Hamiltonian vector field cannot be integrated to a finite-time Hamiltonian flow: both existence and uniqueness of solutions of the Hamilton's equations can fail. In Section \ref{sec: Hamiltonian for a circle} we study in detail an example of this situation: the radial evolution on an annulus on Minkowski plane. Despite the failure of the Hamiltonian picture in such cases, the formalism of canonical relations works perfectly and provides a more general framework for describing the evolution.

\subsection{Plan of the paper}
\begin{itemize}
\item In Section~\ref{sec: free boson} we review the construction of the boundary phase space for the classical field theory defined by action (\ref{S}) on a general pseudo-Riemannian manifold with boundary.
\item In Section \ref{sec: Minkowski plane} we specialize to the case of a domain $M$ in the Minkowski plane and study several simple examples explicitly. In particular,
we show that $L_M$ is Lagrangian if $M$ is a strip.\footnote{Since this is a noncompact manifold, appropriate restrictions on the behavior of fields at infinity are to be imposed.} In the case when the boundary is light-like, we observe however that no choice of boundary condition leads to uniqueness of solutions.
We also consider a diamond on the Minkowski plane with edges aligned in light-like directions and show that $L_M$ for this domain is Lagrangian.
\item 
Section \ref{sec: compact}
is central to this paper.
Here we specialize further to the case of compact domains on the Minkowski plane bounded by a collection of smooth curves with only finitely many light-like points (with the technical requirement that the boundary should have non-zero curvature at the light-like points). We prove that $L_M$ is Lagrangian for such domains (Theorem \ref{thm}).
\item In Section \ref{sec: remarks} we comment on several associated issues, in particular:
\begin{itemize}
\item Problems with Dirichlet boundary conditions (non-transversality of the corresponding $L'$ and $L_M$).
\item Constraint (Cauchy) subspaces of the phase space (constraints arising from the requirement of extendability of boundary fields to a solution of the wave equation in an open neighborhood of the boundary).
\item Conformal invariance of the problem. In particular, the result of Theorem \ref{thm} extends to domains with a non-flat Lorentzian metric conformally equivalent to the flat one.
\item The Hamiltonian formalism corresponding to radial evolution on the plane and issues with integrating the corresponding Hamiltonian vector field into a flow (both in the Fr\'echet and in the Banach setting).
\item The representation of the operad of little 2-disks by canonical relations coming from evolution relations $L_M$.
\item Interpretation of the property of being Lagrangian for the evolution relation for a general classical free field theory, possibly with gauge symmetry, in terms of (generalized) Lefschetz duality, and the specialization to the theory defined by the action (\ref{S}).
\item Extension of the result of Theorem \ref{thm} to more general Lorentzian surfaces, satisfying certain constraints on the metric.
\item An example of a Lorentzian surface with a non-Lagrangian evolution relation -- the Misner metric on a cylinder.
\end{itemize}
\end{itemize}

\subsection{Acknowledgements}
We thank C. De Lellis, T. Kappeler, V. Schroeder and  A. Weinstein for useful discussions.

\section{Classical massless free boson on a pseudo-Riemannian manifold: boundary structures} \label{sec: free boson}

The construction of \cite{CMR2} in case of the free massless boson, see action (\ref{S}), on a pseudo-Riemannian manifold
associates to a closed oriented $(\n-1)$-manifold $\Sigma$ endowed with a triple of a function, a vector field and a volume form $(\Gamma,u,\mu)\in \CC^\infty(\Sigma)\times \mathfrak{X}(\Sigma)\times \Omega^{\n-1}(\Sigma)$, a \textit{pre-phase space}
\be \tilde \Phi_\Sigma= \CC^\infty(\Sigma)\times \CC^\infty(\Sigma) \label{Phi tilde}\ee
with coordinates denoted by $(\phi,\phi_n)$. The pre-phase space is endowed
with the 1-form
\be \tilde\alpha_\Sigma=\int_\Sigma \mu\,(\Gamma\, \phi_n+ u(\phi))\,\delta \phi \quad \in \Omega^1(\tilde\Phi_\Sigma) \label{alpha}\ee
More concretely, for each $(\phi,\phi_n)\in\tilde \Phi_\Sigma$, $\tilde\alpha_\Sigma(\phi,\phi_n)$ is the linear map
$\tilde \Phi_\Sigma\to\mathbb{R}$,
\[
(f,f_n)\mapsto \int_\Sigma \mu\,(\Gamma\, \phi_n+ u(\phi))\, f
\]

The presymplectic structure on the pre-phase space is defined as
\be \tilde\omega_\Sigma=\delta \tilde\alpha_\Sigma=
\int_\Sigma \mu\,(\Gamma\, \delta\phi_n+ u(\delta\phi))\,\delta \phi \quad \in \Omega^2(\tilde\Phi_\Sigma) \label{presymp form}\ee
where $\delta$ in $\delta \tilde\alpha_\Sigma$ stands for de Rham differential on $\tilde\Phi_\Sigma$.\footnote{For more details on ``local'' differential forms on spaces of fields, see e.g. \cite{QFT_IAS}.}
More concretely, $\tilde\omega_\Sigma$ is the skew symmetric bilinear map
$\tilde \Phi_\Sigma\times\tilde \Phi_\Sigma\to\mathbb{R}$,
\[
((f,f_n),(g,g_n))\mapsto
\int_\Sigma \mu\,(\Gamma\, f_n+ u(f))\,g
-\int_\Sigma \mu\,(\Gamma\, g_n+ u(g))\,f
\]

The phase space $\Phi_\Sigma$ is defined as the reduction of the pre-phase space by the kernel of the presymplectic form,
$$\Phi_\Sigma=\tilde\Phi_\Sigma/ \ker(\tilde\omega_\Sigma)$$
The 2-form $\tilde\omega_\Sigma$ descends to a symplectic structure on the phase space, $\omega_\Sigma\in \Omega^2(\Phi_\Sigma)$.

\begin{remark}
The geometric data $(\Gamma,u,\mu)$ on $\Sigma$ can be considered modulo equivalence $(\Gamma,u,\mu)\sim (c\Gamma,cu,c^{-1}\mu)$ for any nonvanishing $c\in \CC^\infty(\Sigma)$. Also, the data $(\Gamma,u,\mu)$ up to this equivalence can be viewed as a section
$$(\Gamma+ u)\mu \in \Gamma(\Sigma, (\RR\oplus T\Sigma)\otimes \wedge^{\n-1}T^*\Sigma)$$
where $\RR$ stands for the trivial real line bundle over $\Sigma$.
\end{remark}

In case when $\Sigma=\dd M$ is the boundary of an $\n$-manifold $M$, the geometric data $(\Gamma,u,\mu)$ are inferred from the metric $g$ on $M$ as follows:
\be \Gamma(x) = g^{-1}(x)(n_x^*,n_x^*),\quad u(x)=g^{-1}(x)(n_x^*,\bt)-\Gamma(x) n_x\;\; \in T_x \dd M,\quad \mu = \iota_{n}\mu_g \label{geom structure from bulk}\ee
Here we chose some 
vector field on the boundary\footnote{We do not require any compatibility of $n$ with the metric on $M$.} $n\in \Gamma(\dd M, i^* TM)$ transversal to the boundary everywhere (we denote $i:\dd M\hra M$ the embedding of the boundary); $n^*\in \Gamma(\dd M, i^* T^*M)$ is the covector field on the boundary defined by $\langle n^*_x, n_x \rangle=1$, $\langle n^*_x, T_x \dd M\rangle=0 $; $\mu_g=\sqrt g\, d^\n x$ is the metric volume element on $M$; $\iota_\bt$ stands for contraction of a form with a vector field.

The projection $\tilde\pi_M: F_M\ra \tilde\Phi_{\dd M}$ sends $\phi\in \CC^\infty(M)$ to $(\phi|_{\dd M}, \dd_n \phi|_{\dd M})$ -- values of $\phi$ at the boundary and 
derivative along $n$ at the boundary.

\begin{remark} Choosing a different 
transversal
vector field at the boundary, $n'=a n+ w$ with nonvanishing $a\in \CC^\infty(\dd M)$ and with $w\in \mathfrak{X}(\dd M)$ a tangent vector field on $\dd M$, results in different induced geometric data on the boundary:
$$(\Gamma',u',\mu')=(a^{-2}\Gamma, a^{-1} u- a^{-2} w, a \mu)$$
The new projection $\tilde\pi_M':F_M\ra \tilde\Phi_\Sigma$, corresponding to $n'$, sends $\phi\in \CC^\infty(M)$ to $(\phi|_{\dd M}, \dd_{n'}\phi|_{\dd M})\in \tilde\Phi_{\dd M}$ and can be viewed as the old one, composed with a
linear isomorphism
of the pre-phase space $\tilde\Phi_{\dd M}\ra \tilde\Phi_{\dd M}$ sending $(\phi,\phi_n)\mapsto (\phi, a \phi_n+w(\phi))$.
\end{remark}

The pull-back of the 1-form $\tilde\alpha_{\dd M}$ 
to 
the space of fields $F_M$
is
\be \tilde\pi_M^* \tilde\alpha_{\dd M}=\int_{\dd M} ( *_g d \phi\cdot \delta\phi)|_{\dd M}=\int_{\dd M} ( \iota_{g^{-1}(d\phi)}\mu_g\cdot \delta \phi)|_{\dd M} \label{alpha Hodge}\ee
It arises as the boundary term of the variation of the action (\ref{S}):
$$\delta S = (-1)^{\n-1}\int_M d\delta\phi \wedge *_g d \phi = -\int_M (d *_g d \phi)\cdot \delta\phi + \tilde\pi_M^* \tilde\alpha_{\dd M} $$

\begin{remark} According to the construction of \cite{CMR, CMR2}, one associates to an $(\n-1)$-manifold $\Sigma$ with a pseudo-Riemannian metric on a 
cylinder $M_\epsilon=\Sigma\times [0,\epsilon]$ the space $\tilde\Phi_\Sigma$ of 1-jets\footnote{Only 1-jets are required since the density of the action $S$ is of second order in the field derivatives.} of functions on $M_\epsilon$ at $\Sigma\times \{0\}$. The one-form $\tilde\alpha_\Sigma\in \Omega^1(\tilde\Phi_\Sigma)$ arises as the part of the boundary term of the variation of $S$ on $M_\epsilon$ corresponding to the contribution of the boundary component $\Sigma\times \{0\}$. The geometric data $(\Gamma,u,\mu)$ introduced above constitute the part of the metric on $M_\epsilon$ necessary to define the 1-form $\tilde\alpha_\Sigma$. 
The transversal
vector field $n$ arises from the 1-jet of the embedding of the cylinder $M_\epsilon\hra M$ as a neighborhood of the boundary of $M$.
\end{remark}

\section{Two-dimensional Minkowski case}\label{sec: Minkowski plane}
Consider the Minkowski plane $\RR^{1,1}$ with coordinates $(x,y)$ and metric $g=dx^2-dy^2$. Let $D$ be a domain\footnote{By domain here we mean the closure of an open subset.} of $\RR^{1,1}$ with smooth boundary, with metric given by restriction of the Minkowski metric on $\RR^{1,1}$ to $D$. As above, $F_D=\CC^\infty(D)$ and the action (\ref{S}) is
$$S_{D,g}[\phi]=\frac12 \int_D [(\dd_x \phi)^2-(\dd_y \phi)^2]\, dx\,dy$$

Unless otherwise stated, in case of an unbounded domain $D$, we assume that $k$-th derivatives of fields have asymptotics
\be \dd^k\phi\sim O((x^2+y^2)^{-\frac{\eta+k}{2}})\label{phi asymptotics}\ee
at infinity, where $k=0,1,2,\ldots$ and $\eta>0$ is some constant.

The corresponding Euler-Lagrange equation is just the wave equation
$$\dd_x^2\phi-\dd_y^2\phi=0$$

\subsection{Examples of boundary structures}
\label{sec: examples}
In this section we consider $D$ a half-space in $\RR^{1,1}$ with space-like, time-like or light-like boundary $\Sigma=\dd D\simeq \RR$.

Consider the case $D=\RR\times [y_0,\infty)$ with space-like boundary $\dd D=\RR\times \{y_0\}$. Using the construction of Section \ref{sec: free boson}, we choose the transversal vector field at the boundary to be $n=\dd_y$ and obtain the geometric structure (\ref{geom structure from bulk}) on the boundary $(\Gamma,u,\mu)=(-1,0,-dx)$.
The pre-phase space is  $\tilde\Phi_{\dd D}=\CC^\infty(\RR)\times \CC^\infty(\RR)\ni (\phi,\phi_n)$ and the projection $\tilde\pi_D: F_D\ra \tilde\Phi_{\dd D}$ sends $\phi\in \CC^\infty(D)$ to $(\phi|_{y=y_0}, \dd_y \phi|_{y=y_0} )$.
The 1-form (\ref{alpha}) on the pre-phase space 
is
\be \tilde \alpha=\int_{\RR} dx\,\phi_n\,\delta\phi\quad \in \Omega^1(\tilde\Phi_{\dd D}) \label{spacelike alpha}\ee
and its differential
\be \tilde\omega=\int_\RR dx\, \delta \phi_n\wedge \delta \phi \quad \in \Omega^2(\tilde \Phi_{\dd D})\label{spacelike omega}\ee
is weakly non-degenerate, i.e. $\ker \tilde\omega=0$. Thus, there is no symplectic reduction and the phase space coincides with the pre-phase space, $\Phi_{\dd D}=\tilde\Phi_{\dd D}$, with symplectic structure $\omega=\tilde\omega$.

Similarly, for $D=[x_0,\infty)\times\RR$ with time-like boundary $\dd D =\{x_0\}\times \RR$ we pick $n=\dd_x$, which induces geometric data $(\Gamma,u,\mu)=(1,0,dy)$ on the boundary. The projection $\pi_D$ sends $\phi\in \CC^\infty(D)$ to $(\phi|_{x=x_0}, \dd_x \phi|_{x=x_0})$.
The 1-form $\tilde\alpha$ and its differential $\tilde \omega$ are again given by formulae (\ref{spacelike alpha}, \ref{spacelike omega}).
Again, the non-degeneracy of $\tilde\omega$ implies that $\Phi_{\dd D}=\tilde\Phi_{\dd D}$, $\omega=\tilde\omega$.

Next, consider a half-space on $\RR^{1,1}$ with light-like boundary. Using coordinates $\sigma_+=y+x$, $\sigma_-=y-x$ on $\RR^{1,1}$, we set $D=\{(\sigma_+,\sigma_-)\in \RR^{1,1}\,|\, \sigma_-\geq \sigma_-^0\}$ for some $\sigma_-^0\in\RR$. Introducing coordinate vector fields $\dd_\pm=\frac{1}{2}(\dd_y\pm \dd_x)$, we set $n=\dd_-$. This choice yields the boundary geometric data\footnote{It is useful to note that in coordinates $\sigma_\pm$, the metric, its inverse and the metric volume element on $D$ are, respectively, $g=-d\sigma_+\cdot  d\sigma_-$, $g^{-1}=-4\, \dd_+\cdot \dd_-$, $\mu_g=\frac{1}{2}\, d\sigma_+\wedge d\sigma_-$. Here $\cdot$ stands for the symmetrized tensor product.} $(\Gamma,u,\mu)=(0,-2 \dd_+, -\frac{1}{2}d \sigma_+)$, therefore
$$\tilde\alpha=\int_\RR d\sigma_+\;\dd_+ \phi\;\delta \phi ,\qquad  \tilde\omega=\delta\tilde\alpha=\int_\RR d\sigma_+\;(\dd_+ \delta\phi)\wedge\delta \phi $$
Using the linear structure on the pre-phase space, we can regard the presymplectic structure $\tilde\omega$ as an anti-symmetric bilinear form
on $\tilde\Phi$
given by
\be\tilde\omega((\phi,\phi_n),(\psi,\psi_n))=\int_\RR  d\sigma_+\;((\dd_+ \phi)\;\psi-(\dd_+ \psi)\;\phi) 
\label{omega for lightlike}\ee

The kernel of $\tilde\omega$ and hence the symplectic reduction depend on the allowed behavior of $\phi$ at $\sigma_+\ra \infty$. For instance, we have the following.
\begin{enumerate}[(i)]
\item If we require $\lim_{\sigma_+\ra \infty}\phi(\sigma_+)=0$ then the presymplectic form (\ref{omega for lightlike}) becomes
\be \tilde\omega((\phi,\phi_n),(\psi,\psi_n))=2\int_\RR d\sigma_+\; (\dd_+ \phi)\; \psi\label{omega lightlike 1}\ee
So, $(\phi,\phi_n)\in \ker\tilde\omega$ iff $\dd_+\phi=0$, but by the vanishing requirement at $\sigma_+\ra \infty$ this implies $\phi=0$. Hence, $\ker\tilde\omega=\{0\}\times \CC^\infty(\RR) \subset \tilde\Phi$ and the phase space is
\be\Phi=\tilde\Phi/\ker\tilde\omega = \CC^\infty(\RR)\ni \phi \label{phase space lightlike}\ee
with (non-degenerate) symplectic structure given by r.h.s. of (\ref{omega for lightlike}).
\item Requiring that $\phi$ has some (possibly, different) limits at $\sigma_+\ra\pm\infty$, we get a boundary term, integrating by parts in (\ref{omega for lightlike}):
$\tilde\omega((\phi,\phi_n),(\psi,\psi_n))=-|\phi\psi|^{+\infty}_{-\infty}+2\int_\RR d\sigma_+\, (\dd_+ \phi)\, \psi$.
Thus $(\phi,\phi_n)\in \ker\tilde\omega$ iff $\dd_+ \phi=0$ and $\phi(\pm\infty)=0$, which again implies $\phi=0$. So, $\ker\tilde\omega$ is the same as in case of vanishing condition at $\sigma_+\ra\infty$ and the phase space is again given by (\ref{phase space lightlike}) (though now we impose different asymptotical conditions on $\phi$).
\item Imposing periodic asymptotics $\phi(+\infty)=\phi(-\infty)$, we get back to (\ref{omega lightlike 1}) but now the kernel becomes bigger:
    $$\ker\tilde\omega=\{(\phi=C,\phi_n\in \CC^\infty(\RR))\;|\; C\in\RR\}\subset \tilde\Phi$$
    Thus the phase space is $\Phi=\CC^\infty(\RR)/\RR$ where we consider functions differing by a constant shift as equivalent. We can choose the section of this quotient e.g. by requiring $\phi(0)=0$. In this case the projection $\pi_D: F_D\ra \Phi$ maps $\phi\in \CC^\infty(D)$ to $\psi(\sigma_+)=\phi(\sigma_+,\sigma_-^0)-\phi(0,\sigma_-^0)$.
\end{enumerate}

\subsection{Canonical relations}
Related to examples of the previous section, with $D\subset \RR^{1,1}$ a half-space with boundary $\dd D=\Sigma$ a line in $\RR^{1,1}$, are cases when $D\subset \RR^{1,1}$ is a strip with boundary $\Sigma\sqcup \Sigma^\mr{op}$ where $\mr{op}$ denotes the opposite orientation. In all these cases $L_D$ is Lagrangian as we presently prove.

For $\Sigma$ space-like, consider $D=\RR\times [y_0,y_1]$. Denote $\pi:=\dd_y\phi$. Then the 1-form on the phase space
$$\Phi_{\dd D}=\CC^\infty(\RR)^{\times 4}\ni (\phi_0,\pi_0,\phi_1,\pi_1)$$
is
$$\alpha=\int_\RR (\pi_1 \delta \phi_1 - \pi_0 \delta \phi_0)\; dx$$
where subscript $i$ corresponds to boundary components $y=y_i$ of the strip, $i=0,1$ (and we are still assuming asymptotics (\ref{phi asymptotics}) for fields $\pi,\phi$). The Euler-Lagrange equation can be rewritten as a system
\begin{eqnarray*}
\dd_y \pi &=& \dd_x^2 \phi \\
\dd_y \phi &=& \pi
\end{eqnarray*}
The system is Hamiltonian with respect to the symplectic form $\int_\RR \delta\pi\wedge \delta\phi\; dx$ and to the Hamiltonian function $H=\frac{1}{2}\int_\RR (\pi^2+(\dd_x \phi)^2)\, dx$. Since $L_D$ is the graph of the corresponding Hamiltonian flow from time $y_0$ to time $y_1$, it is Lagrangian.

Similarly one proves that $L_D$ is Lagrangian for $\Sigma$ time-like, for the strip $D=[x_0,x_1]\times\RR$.

Finally, consider the case when $\Sigma$ is light-like. Passing to coordinates $\sigma_\pm$, we consider the strip $D=\{(\sigma_+,\sigma_-)\in \RR^{1,1}\, |\, \sigma_-^0\leq \sigma_-\leq \sigma_-^1\}$. The Euler-Lagrange equation becomes $\dd_+\dd_-\phi=0$, which has general solution
\be \phi(\sigma_+,\sigma_-)=f(\sigma_+)+g(\sigma_-)\label{wave eq sol}\ee with $f$ and $g$ arbitrary functions. Therefore, for any $\sigma^0_-$ and $\sigma^1_-$, $L_D$ is the diagonal in $\bar\Phi_\Sigma\times \Phi_\Sigma$, where bar denotes opposite symplectic structure, so it is Lagrangian. Observe however that $g$ cannot be determined by boundary conditions. As a consequence, on such strips we cannot have uniqueness of solutions.

\subsection{Light-like diamond}

Consider a diamond in Minkowski plane with piecewise light-like boundary,\footnote{The construction of Section \ref{sec: free boson} extends naturally to the case of manifolds with piecewise smooth boundary. In this case, for the pre-phase space (\ref{Phi tilde}) one takes pairs of piecewise smooth continuous functions (smooth where the boundary is smooth).}
$$D=\{(\sigma_+,\sigma_-)\in \RR^{1,1}\;|\; \sigma_+^0\leq \sigma_+\leq \sigma_+^1,\;\sigma_-^0\leq \sigma_-\leq \sigma_-^1\}$$
We label the four vertices of the diamond as
$$a=(\sigma_+^0, \sigma_-^0),\; b=(\sigma_+^1, \sigma_-^0),\; c=(\sigma_+^1, \sigma_-^1),\; d=(\sigma_+^0, \sigma_-^1)$$

Proceeding as in Section \ref{sec: examples}, we obtain the pre-phase space\footnote{We are not including the normal derivative $\phi_n$ in our description of $\Phi_{\dd D}$, since it does not appear in the 2-form (\ref{omega diamond}) and would be eliminated by symplectic reduction anyway.}
$$\Phi_{\dd D}=\{\phi\in C^0(\dd D)\mbox{ smooth on edges of }\dd D\}$$
We denote restrictions of $\phi$ to the four edges of the diamond by $\phi^{ab},\phi^{dc}\in \CC^\infty[\sigma_+^0,\sigma_+^1]$, $\phi^{ad},\phi^{bc}\in \CC^\infty[\sigma_-^0,\sigma_-^1]$ respectively.

The pre-symplectic 2-form induced on $\Phi_{\dd D}$ is
\be \omega=\int_{\dd D} \epsilon \;d\delta \phi\wedge\delta\phi \label{omega diamond}\ee
where $\epsilon=+1$ on two edges parallel to $\dd_+$ and $\epsilon=-1$ on the other two.
Viewed as an anti-symmetric bilinear pairing $\Phi_{\dd D}\otimes \Phi_{\dd D}\ra\RR$, the pre-symplectic structure is
\be
\omega(\phi,\psi)=2 \int_{\dd D} \epsilon\; d\phi\cdot \psi+2\;(\phi_a\psi_a-\phi_b\psi_b+\phi_c\psi_c-\phi_d\psi_d)
\label{omega diamond bilinear}\ee
where we used integration by parts to transfer derivatives from $\psi$ to $\phi$. Subscript $a,b,c,d$ stands here for evaluation of $\phi$ or $\psi$ at the corresponding vertex of the diamond.

It follows from (\ref{omega diamond bilinear}) that $\phi\in\ker\omega$ implies $\phi=C\in\RR$ -- a constant on the whole $\dd D$. On the other hand $\omega(C,\psi)=2C (\psi_a-\psi_b+\psi_c-\psi_d)$, hence $\ker\omega=0$. Thus $\omega$ is actually non-degenerate and $(\Phi_{\dd D},\omega)$ is the symplectic phase space, with no further symplectic reduction required.

\subsubsection{Evolution relation}
Using the general ansatz (\ref{wave eq sol}) for solutions of the wave equation, the evolution relation $L\subset\Phi_{\dd D}$ can be described as
$$
L=\{
\phi(\underbrace{\sigma_+,\sigma_-}_{\in\dd D})=f(\sigma_+)+g(\sigma_-)
\;|\;
f\in \CC^\infty[\sigma_+^0,\sigma_+^1],\;
g\in \CC^\infty[\sigma_-^0,\sigma_-^1]
\}
$$

To show that $L\subset \Phi_{\dd D}$ is a Lagrangian subspace (and thus verify Conjecture \ref{Conj} in this case), we check isotropicity and coisotropicity of $L$. For isotropicity, we have
\begin{multline*}
\omega|_L= \int_{[\sigma_+^0,\sigma_+^1]} d\delta f\wedge(\delta g(\sigma_-^0)-\delta g(\sigma_-^1))+
\int_{[\sigma_-^0,\sigma_-^1]} d\delta g\wedge(\delta f(\sigma_+^0)-\delta f(\sigma_+^1))=\\
=(\delta f(\sigma_+^1)-\delta f(\sigma_+^0))\wedge (\delta g(\sigma_-^0)-\delta f(\sigma_-^1))+ (\delta g(\sigma_-^1)-\delta g(\sigma_-^0))\wedge(\delta f(\sigma_+^0)-\delta f(\sigma_+^1))=0
\end{multline*}
Thus $L$ is indeed isotropic. For coisotropicity, (\ref{omega diamond bilinear}) implies that for $\phi\in L$ and $\psi$ arbitrary,
$$\omega(\phi,\psi)=2\int_{[\sigma_+^0,\sigma_+^1]} df\; (\psi^{ab}-\psi^{dc})+
2\int_{[\sigma_-^0,\sigma_-^1]} dg\; (\psi^{ad}-\psi^{bc})+
\mbox{contributions of corners}
$$
Thus $\psi\in L^\perp$ implies (by setting $f(\sigma_+^0)=g(\sigma_-^0)=0$ and taking $df$ or $dg$ to be the difference of two bump 1-forms localized near two points, so that the total integral vanishes) $\psi^{ab}-\psi^{dc}=C$, $\psi^{ad}-\psi^{bc}=C'$ where $C,C'\in\RR$ are two constants. This implies in turn that $\psi\in L$, with corresponding $f_\psi(\sigma_+),g_\psi(\sigma_-)$ given by
$$f_\psi(\sigma_+)=\psi^{ab}(\sigma_+)-\psi^{ab}(\sigma_+^0),\quad g_\psi(\sigma_-)=\psi^{ad}(\sigma_-)$$
This proves coisotropicity of $L$ and hence $L$ is indeed Lagrangian.

\subsubsection{Hamilton-Jacobi action}
Restriction of the action (\ref{S}) to solutions of Euler-Lagrange equation is in general
$$S|_{EL}=\frac12 \int_M d\phi\wedge *d\phi=\underbrace{-\frac12 \int_M \phi\wedge d*d\phi}_{0\;\mr{on}\; EL}+\frac12 \int_{\dd M}(\phi\wedge*d\phi)|_{\dd M}$$
Since this expression is given by a boundary term, it descends to a function on $L$ (at least as a subspace of the pre-phase space, in the general case).

In case of the diamond we have
$$
S|_{EL}=\frac12
\int_{\dd D}\epsilon\; \phi\,d\phi
=\frac12\, (-\phi_a^2+ \phi_b^2-\phi_c^2+\phi_d^2)\quad \in \CC^\infty(L)
$$

Note 
that this Hamilton-Jacobi action depends only on the values of $\phi$ at the vertices of the diamond.

\section{Wave equation on compact domains in Minkowski plane}
\label{sec: compact}

Let $D\subset \mathbb{R}^{1,1}$ be a connected compact domain in the Minkowski plane. We make the following assumptions about its boundary $\gamma=\dd D$.
\begin{enumerate}[(A)]
\item 
Each connected component $\gamma_k$, $1\leq k\leq N$, of the boundary $\gamma$ is a smooth simple closed curve.
\label{assump A}
\item There are finitely many points on $\gamma$ with light-like 
tangent; we denote this set of points $I$. \label{assump B}
\item 
The curvature of $\gamma$ (as a multi-component plane curve) at points of $I$ is non-zero.
\label{assump C}
\end{enumerate}


Assume that each curve $\gamma_k$ is parameterized by $t\in \mathbb{R}/(T_k\cdot \mathbb{Z})$, with $T_k\in \mathbb{R}$ the period. We assume that the orientation of $\gamma_k$ induced from the parametrization agrees with the one induced from the orientation of $D$. Define $\theta: \gamma_k \ra \RR/(\pi\cdot \ZZ)$ and $v:\gamma_k \ra \RR_{>0}$ by
$\theta(t)=\arctan(\frac{\dot y}{\dot x})+\frac{\pi}{2}$, $v(t)=(\dot x^2+\dot y^2)^{1/2}$.

\subsection{Phase space, symplectic structure.}
The phase space\footnote{In the terminology of Section \ref{sec: free boson}, we should be calling it the pre-phase space. Below (cf. Proposition \ref{prop: omega non-deg}) we will show that the presymplectic form on $\Phi_\gamma$ is in fact symplectic, so that no further symplectic reduction is needed. Thus the terminology is justified.} (the space of boundary fields) associated to $\gamma$ is $\Phi_\gamma=\{(\phi,\phi_n)\in \CC^\infty(\gamma)\times \CC^\infty(\gamma)\}$
The projection $\pi: F_D\ra \Phi_\gamma $ sends $\phi\in \CC^\infty(D)$ to its restriction to $\gamma$ and the normal derivative at a point on $\gamma$; ``normal'' means an outward pointing unit normal vector to the boundary
with respect to
\textit{Euclidean} metric on the plane.

The geometric data (\ref{geom structure from bulk}) on $\gamma$, associated to the choice of the Euclidean normal vector field $n=\cos\theta\,\dd_x+\sin\theta\, \dd_y$, is:
$(\Gamma,u,\mu)=(\cos(2\theta),-\frac1v\sin(2\theta)\dd_t, v\, dt)$, which yields the following boundary 1-form (\ref{alpha})
on $\Phi_\gamma$:
$$\alpha=\int_\gamma dt\; (v \cos(2\theta) \dd_n-\sin(2\theta)\dd_t)\phi\; \delta\phi$$
where $\dd_n\phi := \phi_n$ is a notation.
It generates a constant 2-form on $\Phi_\gamma$
$$\omega=\delta \alpha = \int_\gamma dt\;(v \cos(2\theta) \dd_n-\sin(2\theta)\dd_t)\delta\phi\;\wedge \delta\phi$$
Using the linear structure on $\Phi_\gamma$, we can view $\omega$ as an anti-symmetric pairing $\Phi_\gamma\otimes \Phi_\gamma \ra \RR$,
\begin{multline}\label{omega as a pairing}
\omega((\phi,\phi_n),(\psi,\psi_n))=\\
=\int_\gamma dt\cdot \left( v\cos(2\theta) \phi_n\; \psi-\sin(2\theta) \dd_t \phi\; \psi - \phi\; v\cos(2\theta)\psi_n + \phi\;\sin(2\theta)\dd_t \psi \right)
\end{multline}

\begin{proposition}\label{prop: omega non-deg}
Two-form $\omega$ is non-degenerate on $\Phi_\gamma$.
\end{proposition}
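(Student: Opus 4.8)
The plan is to show directly that $\ker\omega=\{0\}$; that is, any $(\phi,\phi_n)\in\Phi_\gamma$ which pairs to zero against every element of $\Phi_\gamma$ must itself vanish. The key structural observation is that the pairing (\ref{omega as a pairing}) depends on the test element $(\psi,\psi_n)$ linearly and through $\psi$ and $\psi_n$ \emph{separately}, and that $\psi$ and $\psi_n$ range independently over all of $\CC^\infty(\gamma)$. I would therefore recover the vanishing of $\phi$ and of $\phi_n$ by testing against the two families $(\psi,\psi_n)=(0,\psi_n)$ and $(\psi,\psi_n)=(\psi,0)$ in turn.

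First I would take the test element $(0,\psi_n)$ with $\psi_n\in\CC^\infty(\gamma)$ arbitrary. Every term of (\ref{omega as a pairing}) carrying a factor of $\psi$ (or of $\dd_t\psi$) then drops out, leaving
\[
\omega((\phi,\phi_n),(0,\psi_n))=-\int_\gamma dt\; \phi\, v\cos(2\theta)\,\psi_n .
\]
Requiring this to vanish for all $\psi_n$ forces $\phi\, v\cos(2\theta)\equiv 0$ on $\gamma$. Since $v>0$ everywhere by construction, this is equivalent to $\phi\cos(2\theta)\equiv 0$. Now $\cos(2\theta)$ vanishes exactly at the light-like points, i.e.\ on the set $I$, which is finite by assumption (\ref{assump B}); hence $\phi$ vanishes on the dense complement $\gamma\setminus I$, and by continuity of $\phi$ we conclude $\phi\equiv 0$.

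With $\phi\equiv 0$ in hand, I would next test against $(\psi,0)$ for arbitrary $\psi\in\CC^\infty(\gamma)$. The pairing now collapses to
\[
\omega((0,\phi_n),(\psi,0))=\int_\gamma dt\; v\cos(2\theta)\,\phi_n\,\psi ,
\]
and the identical reasoning yields $v\cos(2\theta)\phi_n\equiv 0$, hence $\phi_n\equiv 0$ first on $\gamma\setminus I$ and then, by continuity, everywhere. This gives $(\phi,\phi_n)=0$, so $\ker\omega=\{0\}$ and $\omega$ is non-degenerate.

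The only delicate point, and the place I expect the sole obstacle, is the behaviour at the light-like points: there $\cos(2\theta)=0$, so the two surviving terms of the pairing carry no information about $\phi$ or $\phi_n$ at those points. This is resolved precisely by assumption (\ref{assump B}): because $I$ is finite it is nowhere dense, so vanishing of the smooth (hence continuous) functions $\phi$ and $\phi_n$ away from $I$ propagates to all of $\gamma$. It is worth noting that the anti-symmetric $\sin(2\theta)\dd_t$ terms never enter the computation, since each carries a factor that is set to zero in one of the two tests, and that the non-vanishing curvature hypothesis (\ref{assump C}) is not required for this proposition---only the discreteness of $I$.
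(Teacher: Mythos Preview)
Your proof is correct and follows essentially the same route as the paper's: both extract the conditions $v\cos(2\theta)\,\phi=0$ and (after using $\phi=0$) $v\cos(2\theta)\,\phi_n=0$ by testing separately against $\psi_n$ and $\psi$, and both conclude using assumption~(\ref{assump B}) that $\cos(2\theta)$ has only isolated zeros. The only cosmetic difference is that the paper writes down the two Euler--Lagrange-type equations simultaneously (so the $\sin(2\theta)\dd_t$ terms appear in the second equation before being killed by $\phi=0$), whereas you substitute $\phi=0$ first and never see them; your closing remark that assumption~(\ref{assump C}) is not used here is accurate.
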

\begin{proof}
Indeed, by (\ref{omega as a pairing}), a pair $(\phi,\phi_n)\in \Phi_\gamma$ is in the kernel of $\omega$ if and only if
$$ \left\{\begin{array}{l} -v\cos(2\theta) \phi = 0  \\ -\dd_t(\sin(2\theta) \phi)+v\cos(2\theta)\phi_n-\sin(2\theta) \dd_t \phi=0   \end{array} \right. \Leftrightarrow \left\{ \begin{array}{l} \phi=0 \\ \phi_n=0  \end{array} \right. $$
where we use that, by assumption (\ref{assump B}), $\cos(2\theta)$ vanishes in isolated points.
\end{proof}

\subsection{Evolution relation: main theorem.}
Set $EL_D=\{\phi\in \CC^\infty(D)\;|\; d*d\phi=0\}\subset F_D$
-- the space of solutions to the wave equation in $D$ and also set
$$L=\pi(EL_D)\subset \Phi_\gamma$$
-- the \textit{evolution relation}.

\begin{thm}\label{thm}
The evolution relation $L$ is a Lagrangian subspace of $\Phi_\gamma$. 
\end{thm}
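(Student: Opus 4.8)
The plan is to use that $L$ is automatically isotropic (by the universal argument of \cite{CMR, CMR2} already quoted above), so that $L\subseteq L^\perp$, whence it suffices to prove \emph{coisotropy}, i.e. $L^\perp\subseteq L$. The whole argument is cleanest in the null coordinates $\sigma_\pm=y\pm x$, in which the wave operator is $-4\,\dd_+\dd_-$, every solution has the form $\phi=f(\sigma_+)+g(\sigma_-)$, and (using $*d\sigma_+=d\sigma_+$, $*d\sigma_-=-d\sigma_-$) the boundary $1$-form becomes
\be
\alpha=\int_\gamma\big(\dd_+\phi\;\delta\phi\;d\sigma_+-\dd_-\phi\;\delta\phi\;d\sigma_-\big),
\ee
so that $\omega=\delta\alpha=\int_\gamma(\delta(\dd_+\phi)\wedge\delta\phi\,d\sigma_+-\delta(\dd_-\phi)\wedge\delta\phi\,d\sigma_-)$ with $d\sigma_\pm=\dot\sigma_\pm\,dt$. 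First I would record that the pair $(\phi,\phi_n)\in\Phi_\gamma$, together with the tangential derivative $\dd_t\phi$, determines the full gradient of $\phi$ along $\gamma$, hence two honest functions $P:=\dd_+\phi$ and $Q:=\dd_-\phi$ on $\gamma$; and that for $\phi\in EL_D$ one has $P(t)=f'(\sigma_+(t))$, $Q(t)=g'(\sigma_-(t))$. Thus $L$ consists exactly of those boundary data whose $P$-part is constant along each characteristic $\sigma_+=\mathrm{const}$, whose $Q$-part is constant along each $\sigma_-=\mathrm{const}$, and whose value is compatible with integrating $P,Q$.

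Next I would cut $\gamma$ at the finitely many light-like points of $I$ (assumption (\ref{assump B})). On each resulting open arc neither $\dd_+$ nor $\dd_-$ is tangent to $\gamma$, so $\sigma_+$ and $\sigma_-$ are both strictly monotone and $\omega$ restricts, arc by arc, to a sum $\int\delta P\wedge\delta\phi\,d\sigma_+-\int\delta Q\wedge\delta\phi\,d\sigma_-$ of the canonical forms already met in the strip and diamond computations of Section \ref{sec: Minkowski plane}. A light-like point is precisely a turning point of $\sigma_+$ or of $\sigma_-$ along $\gamma$ (since $\cos(2\theta)=(\dot y^2-\dot x^2)/(\dot x^2+\dot y^2)$ vanishes there), where two arcs are glued; as $\sigma_+$, say, folds back, the $P$-data on the two arcs meeting at the point must agree, producing a reflection condition.

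To prove coisotropy, take $\psi=(\psi,\psi_n)\in L^\perp$ with associated null components $P_\psi,Q_\psi$. Feeding into $\omega(\pi\phi,\psi)=0$ test solutions $\phi=f(\sigma_+)+g(\sigma_-)$ with $f',g'$ localized near single characteristic values, I would deduce that $P_\psi$ takes the same value at all boundary points sharing a given $\sigma_+$, and $Q_\psi$ at all points sharing a given $\sigma_-$; equivalently $P_\psi=\hat p(\sigma_+)$, $Q_\psi=\hat q(\sigma_-)$ for functions of one null variable, together with a compatibility condition fixing $\psi$ up to an additive constant. Integrating, $f:=\int\hat p$ and $g:=\int\hat q$ produce a solution $\phi_\psi=f(\sigma_+)+g(\sigma_-)\in EL_D$, and one checks $\pi(\phi_\psi)=\psi$, i.e. $\psi\in L$. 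Since $\omega$ is non-degenerate (Proposition \ref{prop: omega non-deg}), $L=L^\perp$ is then Lagrangian.

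The main obstacle is regularity at the light-like points, and this is exactly where assumptions (\ref{assump B})--(\ref{assump C}) enter. Global single-valuedness of $\hat p,\hat q$ is a consistency statement around the closed curve, assembled from the local reflection conditions at the turning points; but the delicate issue is that $\phi_\psi$, reconstructed from boundary data, must be \emph{smooth} across each light-like point rather than merely continuous. Near such a point $\sigma_+$ has a critical value and, by the non-vanishing curvature hypothesis (\ref{assump C}), the critical point is a nondegenerate quadratic fold, $\sigma_+(t)=\sigma_+(t_0)-c\,(t-t_0)^2+O((t-t_0)^3)$ with $c\neq0$; it is precisely this quadratic behaviour that lets the data measured on the two folding arcs assemble into a $\hat p$ that is smooth up to the turning value, hence into a genuinely smooth solution $\phi_\psi$. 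I expect the bulk of the technical work to be this smoothness analysis at the folds (controlling the matched Taylor expansions of $\hat p,\hat q$), whereas the monotone arcs reduce to the already-established strip and diamond cases.
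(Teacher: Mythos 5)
Your proposal has a genuine gap, and it is not the fold--smoothness issue you flag at the end: it is the step where you pass from $\omega(\pi\phi,\psi)=0$ to ``$P_\psi=\hat p(\sigma_+)$, $Q_\psi=\hat q(\sigma_-)$ for functions of one null variable.'' That deduction (and the reconstruction built on it) is only valid when every characteristic line meets $D$ in a single segment, i.e.\ essentially for convex $D$. In general a line $\sigma_+=c$ meets $D$ in several disjoint chords; a solution component $F$ with $\dd_-F=0$ is constant on each chord but may take different values on different chords (the paper notes exactly this right after (\ref{EL simply connected})), so your description of $L$ is already too small, and your test solutions $f(\sigma_+)+g(\sigma_-)$ span only a proper subspace of the solution space. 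Pairing $\psi$ against them yields at best a \emph{summed} condition over all points of $\gamma$ lying on a given characteristic, not pointwise agreement; even testing against all single-valued solutions only yields the chord-wise matching condition ($E_\pm$-invariance of the $1$-forms $\alpha,\beta$ of (\ref{coiso_alpha},\ref{coiso_beta})), which is strictly weaker than ``$P_\psi$ is a function of $\sigma_+$.''

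Worse, the conclusion you aim for is false as stated for multiply connected $D$. Your reconstruction would place every $\psi\in L^\perp$ in $L^\glob:=\pi\bigl(\{F+G\;|\;F,G\in\CC^\infty(D),\ \dd_-F=\dd_+G=0\}\bigr)$, i.e.\ it would prove $L^\perp\subseteq L^\glob$. But for an annulus ($N=2$ boundary components) the paper shows $\dim(L/L^\glob)=N-1=1$ (see (\ref{L/L^glob dim})): there are solutions whose null potentials are necessarily multivalued, with $d\phi=\kappa+\lambda$ for closed but non-exact $\dd_\mp$-horizontal $1$-forms $\kappa,\lambda$. Their boundary data lie in $L\subseteq L^\perp$ but not in $L^\glob$, contradicting your reconstruction. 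Closing this gap requires precisely the topological bookkeeping your proposal omits: separating $L$ from $L^\glob$, computing $\dim\bigl((L^\glob)^\perp/L^\glob\bigr)=2(N-1)$ via periods of $\alpha,\beta$ around the boundary components (Lemma \ref{lemma L^glob dim}), and then sandwiching $L^\glob\subset L\subset L^\perp\subset(L^\glob)^\perp$ to finish by a finite-dimensional count of isotropic versus symplectic dimensions in the quotient. For convex $D$, where each characteristic meets $D$ in one chord, your argument does reduce (modulo the deferred smoothness analysis at the folds, which is where assumption (\ref{assump C}) enters, in the guise of surjectivity of $\{F\;|\;\dd_\pm F=0\}\to\CC^\infty(\gamma)^{E_\pm}$) to the paper's simply connected case; but as a proof of the theorem as stated it is incomplete.
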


\subsubsection{Evolution relation in the simply connected case and involutions $E_\pm$ on the boundary
}
\label{sec: Evol relation and involutions}
In case when $D$ is \textit{simply connected} (N=1), the space of solutions of the wave equation in the bulk $EL_D
$ is given by
\be EL_D=\{\phi=F+G\;|\; F,G\in \CC^\infty(D),\; \dd_- F=\dd_+G=0 \} \label{EL simply connected}\ee
Note that globally $\dd_- F=0$ does not imply $F=F(\sigma_+)$, e.g. if $D$ is not convex.


The two 
distributions
$\dd_\pm$ on $D$ induce two equivalence relations $\epsilon_\pm$ on points of $D$, where two points in $D$ are considered equivalent if they can be connected by a light-like segment with tangent $\dd_{\pm}$ lying inside $D$. In turn, $\epsilon_\pm$ induce equivalence relations $\mathcal{E}_\pm$ on points of $\gamma$.

Denote $I_{\pm}=\{p\in I\, |\, \theta(p)=\mp \pi/4\}$, so that $I=I_+\sqcup I_-$.

By assumptions (\ref{assump A},\ref{assump B}), an equivalence class of $\EE_\pm$ of order $1$ is necessarily a point of $I_{\pm}$ and an equivalence class of order $n\geq 3$ necessarily contains $n-2$ points of $I_\pm$.
Thus there is only a finite set of points $I'_\pm\subset \gamma$ with equivalence class of $\EE_\pm$ of order $\neq 2$.

Therefore, equivalence relations $\EE_\pm$ induce two orientation-reversing smooth involutions $E_\pm : (\gamma-I'_\pm) \ra (\gamma-I'_\pm)$,
i.e. for a point $p\in\gamma-I'$, $E_\pm(p)$ is the point on $\gamma$ where one of the two light-like lines in $D$ starting at $p$ hits $\gamma$ second time.\footnote{The reader is referred to Section \ref{sec: examples of E_pm} for explicit formulae for $E_\pm$ in some examples.}

Denote $$\CC^\infty(\gamma)^{E_{\pm}}=\{f\in \CC^\infty(\gamma)\,|\, f\circ E_{\pm}=f \mbox{ on }\gamma-I'_\pm \}$$

To describe the evolution relation $L_{D}=\pi(EL_D)\subset \Phi_\gamma$, we need the following two decompositions for the unit (Euclidean) normal vector $\dd_n$ at a point on $\gamma$:
\begin{eqnarray*}\dd_n &=& -\frac{1}{v} \cot(\theta-\pi/4) \dd_t + \sqrt{2} \frac{1}{\sin(\theta-\pi/4)} \dd_- \\
\dd_n &=& -\frac{1}{v} \cot(\theta+\pi/4) \dd_t + \sqrt{2} \frac{1}{\sin(\theta+\pi/4)} \dd_+
\end{eqnarray*}
If we denote $f=F|_\gamma, g=G|_\gamma \in \CC^\infty(\gamma)$, then $\dd_- F=\dd_+ G=0$ implies
$$ \dd_n F= -\frac{1}{v} \cot (\theta-\pi/4)\; \dd_t f,\qquad \dd_n G= -\frac{1}{v} \cot (\theta+\pi/4)\; \dd_t g $$
Thus, for $D$ \textit{simply connected}, we may describe $L$ as
\begin{multline}\label{L}
L=\{(\phi,\phi_n)=\left(f+g, -\frac{1}{v} (\cot(\theta-\pi/4)\; \dd_t f+ \cot(\theta+\pi/4)\; \dd_t g ) \right) \; |
\\ |
\;  f\in \CC^\infty(\gamma)^{E_-},\; g\in \CC^\infty(\gamma)^{E_+}\}
\end{multline}
Note that for this description we implicitly use the property that the maps
$$\{F\in \CC^\infty(D)\; |\; \dd_\pm F =0 \} \quad \xrightarrow{\pi} \quad  \CC^\infty(\gamma)^{E_\pm} $$
are surjective, for which assumption (\ref{assump C}) is essential. Note also that the expression $\cot(\theta-\frac{\pi}{4}) \dd_t f+ \cot(\theta+\frac{\pi}{4}) \dd_t g $ in (\ref{L}) is smooth on the whole $\gamma$.

\subsubsection{Evolution relation in the non-simply connected case}
\label{sec: Evol rel non-simply connected}
In general, when $D$ is not necessarily simply connected, the r.h.s. of (\ref{EL simply connected}) is valid as a \textit{local} description of the space of solutions, but globally $F,G$ may fail to exist as single valued functions on $D$. One global description of $EL_D$ is as follows:
\be EL_D=\{\phi\in \CC^\infty(D)\;|\; d\phi=\kappa+\lambda,\; \mr{where}\; \kappa,\lambda\in \Omega^1_\mr{closed}(D),\, \iota_{\dd_-}\kappa= \iota_{\dd_+}\lambda=0 \}  \label{EL non simply connected}\ee
where $\iota_{\dd_\pm}$ is the contraction with the vector field $\dd_\pm$.

For $D$ non-simply connected (note that the involutions $E_\pm$ still make perfect sense, though now they may relate pairs of points in different connected components of $\gamma$), the r.h.s. of (\ref{L}) defines a subspace $L^\mr{glob}\subset L$ corresponding to solutions of the wave equation with single valued $F,G$:
$L^\mr{glob}=\pi(EL_D^\mr{glob})$ where
$EL_D^\mr{glob}$ is given by r.h.s. of (\ref{EL simply connected}).
\begin{lemma}
\be \dim (L/L^\glob)=N-1 \label{L/L^glob dim}\ee
\end{lemma}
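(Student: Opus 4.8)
The plan is to transport the question from the boundary to the bulk and then to read off $\dim(L/L^{\glob})$ as the rank of a cohomological monodromy map. First I would record that $\pi$ is injective on $EL_D$: a solution with vanishing Dirichlet and Neumann data on all of $\gamma$ must vanish, since by assumption (\ref{assump B}) the boundary contains a non-characteristic (space- or time-like) arc, on which zero Cauchy data forces $\phi\equiv 0$ near the arc by uniqueness for the non-characteristic Cauchy problem, and unique continuation then propagates this over the connected domain $D$. Consequently $\pi$ restricts to isomorphisms $EL_D\xrightarrow{\sim}L$ and $EL_D^{\glob}\xrightarrow{\sim}L^{\glob}$, so it suffices to compute $\dim(EL_D/EL_D^{\glob})$.

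Next I would introduce the monodromy map. For $\phi\in EL_D$, the global description (\ref{EL non simply connected}) gives $d\phi=\kappa+\lambda$ with $\kappa,\lambda$ closed and $\iota_{\dd_-}\kappa=\iota_{\dd_+}\lambda=0$; this splitting is unique, since $\kappa$ and $\lambda$ are exactly the $d\sigma_+$- and $d\sigma_-$-components of $d\phi$. Define $\Psi\colon EL_D\to H^1(D)$ by $\Psi(\phi)=[\kappa]$. Because $d\phi$ is exact we have $[\kappa]+[\lambda]=0$, hence $[\lambda]=-[\kappa]$. If $[\kappa]=0$ then $\kappa=dF$ for a single-valued $F$, so $\lambda=d(\phi-F)$ is exact as well and $\phi=F+(\phi-F)$ exhibits $\phi\in EL_D^{\glob}$ in the sense of (\ref{EL simply connected}); the converse inclusion is clear. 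Thus $\ker\Psi=EL_D^{\glob}$ and $EL_D/EL_D^{\glob}\cong\operatorname{im}\Psi\subseteq H^1(D)$.

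The target is small and explicit: $D$ is a compact connected domain in the plane bounded by $N$ disjoint circles, hence homotopy equivalent to a wedge of $N-1$ circles, so $H^1(D)\cong\RR^{N-1}$; concretely a class is recorded by its boundary periods $(p_1,\dots,p_N)$, $p_k=\oint_{\gamma_k}\kappa$, subject to the single Stokes relation $\sum_k p_k=\oint_{\gamma}\kappa=\int_D d\kappa=0$. This already yields the upper bound $\dim(L/L^{\glob})\le N-1$.

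The \textbf{main obstacle} is the reverse inequality, i.e.\ surjectivity of $\Psi$. By the period description it suffices to realize, for $+$-type closed forms, every boundary period vector $(p_k)$ with $\sum_k p_k=0$, and dually for $-$-type forms: given a class $c$, a $+$-type $\kappa$ with $[\kappa]=c$ together with a $-$-type $\lambda$ with $[\lambda]=-c$ produces $\kappa+\lambda$ of trivial class, hence exact, hence equal to $d\phi$ for a single-valued $\phi\in EL_D$ with $\Psi(\phi)=c$. A $+$-type closed form is basic for the null foliation by the lines $\{\sigma_+=\mathrm{const}\}$, so it descends to that foliation's leaf space; assumption (\ref{assump C}) guarantees that $\sigma_+$ meets $\gamma$ only in nondegenerate tangencies, so the leaves are arcs and the leaf space is a finite graph, for which the induced map to $H^1(D)$ is surjective (each of the $N-1$ holes forces the relevant null level set to split and rejoin, producing a loop in the leaf space). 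Equivalently, and closer to the language of (\ref{L}), one prescribes the monodromies $p_k$ and builds a multivalued $E_-$-invariant function on $\gamma$ with those jumps, invoking the surjectivity of $\{F\in\CC^\infty(D):\dd_-F=0\}\to\CC^\infty(\gamma)^{E_-}$ noted after (\ref{L}), which is precisely where assumption (\ref{assump C}) is essential. Carrying out this construction for a basis of $H^1(D)$ shows $\operatorname{im}\Psi=H^1(D)$ and hence $\dim(L/L^{\glob})=N-1$.
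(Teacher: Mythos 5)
Your overall architecture is the same as the paper's: reduce to the bulk by showing $\pi$ is injective on $EL_D$, then compute $EL_D/EL_D^\glob\cong H^1(D)\cong\RR^{N-1}$ via the monodromy map $\phi\mapsto[\kappa]$ coming from (\ref{EL non simply connected}). The uniqueness of the splitting $d\phi=\kappa+\lambda$, the identification $\ker\Psi=EL_D^\glob$, and the period description of $H^1(D)$ are all fine. The genuine gap is your injectivity step. ``Unique continuation'' in the sense you invoke is \emph{false} for the wave equation: a solution can vanish on a nonempty open set without vanishing identically. For example $\phi=g(\sigma_-)$ with $g$ smooth, $g\equiv 0$ on $(-\infty,0]$ and $g\neq 0$ somewhere, solves the wave equation and vanishes identically on the open region $\{\sigma_-<0\}$ of any domain. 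Holmgren-type continuation only crosses \emph{non-characteristic} surfaces, so the vanishing you obtain near one space-like arc stops at the null lines through the endpoints of that arc (the boundary of its domain of dependence) and cannot be ``propagated over the connected domain $D$.'' To get $\ker\pi\cap EL_D=0$ one must genuinely use the Cauchy data on \emph{all} of $\gamma$; the paper does this structurally: $(\phi,\phi_n)=0$ forces $\kappa|_\gamma=\lambda|_\gamma=0$ (these restrictions are recovered linearly from the Cauchy data by the formulae (\ref{coiso_alpha},\ref{coiso_beta})), and since $\kappa,\lambda$ are closed and $\dd_\mp$-horizontal they are constant along the null lines, every one of which hits $\gamma$ in the compact domain $D$; hence $\kappa=\lambda=0$, so $d\phi=0$ and $\phi$ is a constant vanishing on $\gamma$, i.e.\ $\phi=0$.

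A secondary, lesser issue is the surjectivity of $\Psi$ onto $H^1(D)$: your leaf-space argument asserts precisely the point at stake (``each of the $N-1$ holes produces a loop in the leaf space,'' ``one builds a multivalued $E_-$-invariant function with prescribed jumps''). The paper makes this concrete: order the components so that for each $i<N$ there is an open $U_i\subset\gamma_i-I'$ with $E_-(U_i)$ on a later component $\gamma_j$, take a bump $1$-form $\psi_i$ on $U_i$, and set $\kappa_i=p_-^*(\psi_i+E_-^*\psi_i)$, which is closed and $\dd_-$-horizontal by construction; the period computation (period $+c$ on $\gamma_i$, $-c$ on $\gamma_j$, $0$ elsewhere, triangular in $i$) shows that the $\kappa_i$ span $H^1(D)$. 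Your ``equivalently'' sentence gestures at exactly this construction, so this step is repairable by supplying it; the injectivity step, however, needs to be replaced, not just expanded.
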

\begin{proof}
In $D$ we have a short exact sequence
\be EL^\mr{glob}_D \hra EL_D \twoheadrightarrow H^1(D) 
\label{EL^glob seq}\ee
where $H^1(D)$ is the de Rham cohomology of $D$ in degree 1; the second arrow sends $\phi\mapsto [\kappa]\in H^1(D)$ where we use description (\ref{EL non simply connected}). Surjectivity of the second map follows from surjectivity of the map $\{\kappa\in \Omega^1_\mr{closed}(D)\,|\, \iota_{\dd_-}\kappa=0\}\ra H^1(D)$ sending $\kappa\mapsto [\kappa]$. To prove the latter, note that we can reorder boundary components so that for any $1\leq i< N$ there exists an open subset $U_i\subset\gamma_i-I'$ such that $E_-(U_i)\subset\gamma_j$ for some $j>i$. For every $i$, take $\psi_i\in \Omega^1(U_i)$ a bump 1-form supported on $U_i$, and construct a closed $\dd_-$-horizontal 1-form on $D$ as
$\kappa_i=p_-^*(\psi_i+E_-^*\psi_i) $ where $p_-: D\ra \gamma/E_-$ is the projection to the boundary along $\dd_-$. It easy to see, by looking at periods along $\gamma_i$, that restrictions to the boundary $\{\kappa_i|_\gamma\}_{i=1}^{N-1}$ span the kernel of $H^1(\gamma)\ra \RR$ (pairing with the fundamental class of $\gamma$). Therefore $\{\kappa_i\}$ span $H^1(D)$.


It follows from (\ref{EL^glob seq}) that $\dim(EL_D/EL_D^\glob)=N-1$ and since $\pi: EL_D\ra L$ is an isomorphism\footnote{Surjectivity 
follows from the
definition of $L$. Injectivity can be seen as follows: restrictions to $\gamma$ of the 1-forms $\kappa$, $\lambda$ of (\ref{EL non simply connected}) can be explicitly and uniquely recovered from $(\phi,\phi_n)\in L$ by formulae (\ref{coiso_alpha},\ref{coiso_beta}) below. Hence $(\phi,\phi_n)=0$ implies $\kappa|_\gamma=\lambda|_\gamma=0$, which in turn implies, by $\dd_\mp$-horizontality of $\kappa,\lambda$, that $\kappa=\lambda=0$ in $D$. Hence, there can be no non-zero point of $EL_D$ inducing zero on the boundary.}, we have
$ \dim (L/L^\glob)=N-1$.
\end{proof}

\subsection{Proof of Theorem \ref{thm}.}

\begin{lemma}\label{lemma L iso}
$L\subset \Phi_\gamma$ is isotropic.
\end{lemma}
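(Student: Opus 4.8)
The plan is to deduce isotropicity from the universal variational argument of \cite{CMR, CMR2} rather than by a direct computation with the explicit description (\ref{L}). Recall that the variation of the action splits as $\delta S = -\int_D (d*_g d\phi)\cdot\delta\phi + \pi^*\alpha$, where $\pi^*\alpha$ is the boundary term (\ref{alpha Hodge}). The key observation is that on the subspace $EL_D$ of solutions the bulk term carries no information: for $\phi\in EL_D$ we have $d*_g d\phi=0$, so the $1$-form $-\int_D(d*_g d\phi)\cdot\delta\phi$ on $F_D$ pulls back to zero along the inclusion $j: EL_D\hookrightarrow F_D$.

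First I would restrict $S$ to $EL_D$. Writing $\pi: F_D\to\Phi_\gamma$ for the projection and using $j^*\delta S=\delta(S|_{EL_D})$ together with the vanishing of the bulk term just noted, I obtain $\delta(S|_{EL_D})=(\pi\circ j)^*\alpha$; that is, the pullback of $\alpha$ to $EL_D$ is an exact $1$-form. Applying the field-space differential $\delta$ once more and using $\delta^2=0$ together with $\delta\alpha=\omega$ gives $(\pi\circ j)^*\omega=(\pi\circ j)^*\delta\alpha=\delta\big((\pi\circ j)^*\alpha\big)=\delta\big(\delta(S|_{EL_D})\big)=0$. Since $\pi: EL_D\to L$ is surjective by the very definition $L=\pi(EL_D)$ (indeed an isomorphism, as recorded in the footnote accompanying (\ref{EL^glob seq})), this forces $\omega|_L=0$, which is exactly the assertion that $L$ is isotropic.

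This argument is uniform in the topology of $D$: it uses only that elements of $EL_D$ solve the wave equation and never the single-valued splitting $\phi=F+G$, so it applies verbatim in the non-simply-connected case and in particular to all of $L$, not merely $L^\glob$. The only points requiring care are the routine justification of the variational calculus with local forms on the Fr\'echet manifold $F_D$ and the identity $\delta(S|_{EL_D})=(\pi\circ j)^*\alpha$; these are standard and already underlie the construction of Section \ref{sec: free boson}.

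I do not expect a genuine obstacle in this lemma: isotropicity is the \emph{easy half} that holds universally, and the substantive work -- establishing coisotropicity, where the hypotheses (\ref{assump A})--(\ref{assump C}) and the behaviour of the involutions $E_\pm$ near the light-like points $I'_\pm$ really enter -- is deferred to the complementary lemma. Should one prefer a self-contained check, one can instead substitute the explicit description (\ref{L}) into the pairing (\ref{omega as a pairing}) and verify cancellation directly, as was done for the light-like diamond; that computation is more involved here because of the $\cot(\theta\mp\pi/4)$ factors and the need to control the contributions near $I'_\pm$, which is why the variational route is preferable.
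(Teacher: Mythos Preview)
Your proposal is correct and is essentially the same approach as the paper's: both invoke the universal variational argument of \cite{CMR,CMR2}. The paper simply unpacks it as a concrete Stokes--theorem computation, writing $\omega((\phi,\phi_n),(\psi,\psi_n))=\int_\gamma((*d\tilde\phi)\tilde\psi-(*d\tilde\psi)\tilde\phi)=\int_D((d*d\tilde\phi)\tilde\psi-(d*d\tilde\psi)\tilde\phi)=0$ for extensions $\tilde\phi,\tilde\psi\in EL_D$, while you phrase the same identity abstractly via $\delta^2=0$ on field space.
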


\begin{proof}
Indeed, due to (\ref{alpha Hodge}) and using Stokes' theorem, for $(\phi,\phi_n),(\psi,\psi_n)\in L$ we have
\begin{multline*}
\omega((\phi,\phi_n),(\psi,\psi_n))=\int_\gamma ((* d\tilde\phi)\;\tilde\psi -  (* d\tilde\psi)\;\tilde\phi)|_\gamma= \\
=\int_D d( (* d\tilde\phi)\;\tilde\psi - (* d\tilde\psi)\;\tilde\phi)=
\int_D  (d*d \tilde\phi)\;\tilde\psi - (d*d \tilde\psi)\;\tilde\phi =0
\end{multline*}
where $\tilde\phi,\tilde\psi\in EL_D$ are extensions of $(\phi,\phi_n),(\psi,\psi_n)$ into the bulk $D$ as solutions of the wave equation.
\end{proof}
This proof is a specialization of a general argument, applicable to any classical field theory, cf. \cite{CMR}.

Note that Lemma \ref{lemma L iso} implies that $L^\glob$ is isotropic in $\Phi_\gamma$.
\begin{lemma}\label{lemma L^glob dim}
$$\dim \frac{(L^\glob)^\perp}{L^\glob}=2\,(N-1)$$
\end{lemma}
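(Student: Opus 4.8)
The plan is to compute the annihilator $(L^\glob)^\perp$ by hand and then read off $\dim\big((L^\glob)^\perp/L^\glob\big)$ from a short exact sequence parallel to (\ref{EL^glob seq}). I would first note that the abstract duality $\dim(A^\perp/B^\perp)=\dim(B/A)$ applied to $L^\glob\subset L$ is useless here: its validity amounts to the reflexivity $((L^\glob)^\perp)^\perp=L^\glob$, which in this weakly non-degenerate Fr\'echet setting is essentially the coisotropy of $L$ that the whole argument is designed to prove. So a direct computation is unavoidable.

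First I would feed the parametrization (\ref{L}) of $L^\glob$ by $(f,g)\in\CC^\infty(\gamma)^{E_-}\times\CC^\infty(\gamma)^{E_+}$ into the bilinear pairing (\ref{omega as a pairing}). Using $\cos(2\theta)\cot(\theta-\tfrac{\pi}{4})=-(1+\sin 2\theta)$ and $\cos(2\theta)\cot(\theta+\tfrac{\pi}{4})=1-\sin 2\theta$, the contribution of the $\phi_n$-part of $L^\glob$ collapses, and after one integration by parts on $\gamma$ one is left with
\[
\omega\big((\phi,\phi_n),(\psi,\psi_n)\big)=\int_\gamma f\,A_\psi\,dt+\int_\gamma g\,B_\psi\,dt,
\]
where $A_\psi\,dt$ and $B_\psi\,dt$ are two explicit $1$-forms built from $\dd_t\psi$ and $\psi_n$. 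Since $f$ and $g$ range independently over $\CC^\infty(\gamma)^{E_-}$ and $\CC^\infty(\gamma)^{E_+}$, the condition $\psi\in(L^\glob)^\perp$ is equivalent to $A_\psi\,dt$ annihilating all $E_-$-invariant functions and $B_\psi\,dt$ all $E_+$-invariant functions. Because $E_\pm$ reverse orientation, this annihilation says precisely that $A_\psi\,dt$ is $E_-$-invariant and $B_\psi\,dt$ is $E_+$-invariant \emph{as $1$-forms}. A direct check shows that for $\psi\in L$ one has $A_\psi\,dt=-2\,df'$ and $B_\psi\,dt=2\,dg'$ with $f',g'$ the (possibly multivalued) chiral components, which re-derives the inclusion $L\subseteq(L^\glob)^\perp$.

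Next I would turn the perp condition into a dimension count. The assignment $\psi\mapsto(A_\psi\,dt,\,B_\psi\,dt)$ is linear on $\Phi_\gamma$, with kernel the $N$-dimensional space of locally constant $\psi$ with $\psi_n=0$ (all of which lie in $L^\glob$); its image is cut out by the $N$ period conditions $\oint_{\gamma_k}(B_\psi-A_\psi)\,dt=0$ together with pointwise conditions at the light-like points $I$ arising from the factor $\cos 2\theta$ in the recovery of $\psi_n$. Under this map $L^\glob$ corresponds to pairs $(-2\,df',2\,dg')$ with $f',g'$ \emph{globally single-valued} invariant functions, whereas $(L^\glob)^\perp$ corresponds to arbitrary $E_\mp$-invariant $1$-forms in the image. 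Thus $(L^\glob)^\perp/L^\glob$ is governed by the periods of invariant $1$-forms modulo exact ones in each of the two chiral sectors, i.e. by the first cohomology attached to the involutions $E_\pm$; organized as a short exact sequence of the same shape as (\ref{EL^glob seq}), this should identify each chiral sector with a space of dimension $N-1$ and yield $\dim\big((L^\glob)^\perp/L^\glob\big)=2(N-1)$.

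The main obstacle is the analysis at the finitely many light-like points. There $\cos 2\theta$ vanishes, so recovering $\psi_n$ from $(A_\psi,B_\psi)$ forces the right-hand side to vanish to matching order; this is exactly where assumption (\ref{assump C}) on nonzero curvature enters, guaranteeing a simple zero of $\cos 2\theta$ and hence the smoothness and the correct codimension of the constraints. The second delicate point is bookkeeping the $E_\pm$-orbit structure on the components $\gamma_k$: since $E_\pm$ may permute components and always reverses orientation, the invariant period lattice must be computed with care, and one must verify that each chiral sector contributes exactly $N-1$ rather than something depending on the combinatorics of the involution. Matching this with $\dim(L/L^\glob)=N-1$ from (\ref{L/L^glob dim}) is what forces the answer $2(N-1)$, and, together with isotropy, will make $L/L^\glob$ a Lagrangian in the reduction $(L^\glob)^\perp/L^\glob$.
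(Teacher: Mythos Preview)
Your overall strategy matches the paper's: pair an arbitrary $(\psi,\psi_n)$ against the parametrization (\ref{L}), recognize $(L^\glob)^\perp$ as the locus where the two $1$-forms $A_\psi\,dt,\,B_\psi\,dt$ (the paper's $-2\alpha,\,2\beta$ of (\ref{coiso_alpha},\ref{coiso_beta})) are $E_-$- and $E_+$-invariant, and then count via periods. But two concrete errors creep in and only accidentally cancel. First, the $N$-dimensional kernel of $\psi\mapsto(A_\psi,B_\psi)$ is \emph{not} contained in $L^\glob$. If $(\phi,0)\in L^\glob$ with $\phi$ locally constant, then the chiral parts $f,g$ must themselves be locally constant and $E_\mp$-invariant; since (as in the proof of (\ref{L/L^glob dim}), and symmetrically for $E_+$) each involution links all boundary components of the connected domain $D$, both $f$ and $g$ are globally constant. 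Hence $\ker\rho\cap L^\glob=\{(C,0):C\in\RR\}$ is one-dimensional, and the kernel contributes $\dim\bigl(\Omega^0_{\mr{closed}}(\gamma)/\RR\bigr)=N-1$ to $(L^\glob)^\perp/L^\glob$, not zero.

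Second, the two chiral sectors in the image are \emph{not} independent: the very period constraints $\oint_{\gamma_k}(B_\psi-A_\psi)\,dt=0$ that you recorded force $a_k+b_k=0$ for every $k$, where $a_k,b_k$ are the periods of $\alpha,\beta$ over $\gamma_k$. Combined with $\sum_k a_k=\sum_k b_k=0$ (coming from orientation-reversal of $E_\pm$), the periods land in an $(N-1)$-dimensional subspace of $\RR^{2N}$, so the image quotient $\rho((L^\glob)^\perp)/\rho(L^\glob)$ has dimension $N-1$, not $2(N-1)$. The correct short exact sequence is
\[
0\;\longrightarrow\; \Omega^0_{\mr{closed}}(\gamma)/\RR \;\longrightarrow\; (L^\glob)^\perp/L^\glob \;\xrightarrow{\ \rho\ }\; \rho((L^\glob)^\perp)/\rho(L^\glob)\;\longrightarrow\;0,
\]
with both outer terms of dimension $N-1$. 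The total $2(N-1)$ thus splits as kernel plus coupled image, not as two independent chiral halves.
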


\begin{proof}
Let us calculate the symplectic complement of $L^\glob$ in $\Phi_\gamma$. For $(\psi,\psi_n)\in L^\glob$, with $f,g$ denoting the $E_\mp$-invariant parts as in (\ref{L}), we have
\begin{multline}\omega ((\phi,\phi_n),(\psi,\psi_n))= -\int_\gamma dt \;\left(\phi\; \dd_t (f-g) - (f+g)\; (v \cos(2\theta) \phi_n-\sin(2 \theta) \dd_t \phi)\right)
\\=
-\int_\gamma dt\;f\; (-(1-\sin(2\theta))\dd_t \phi - v\cos(2\theta) \phi_n) -
\int_\gamma  dt\;g\; ((1+\sin(2\theta))\dd_t \phi - v\cos(2\theta) \phi_n)
\label{coiso_omega}
\end{multline}
Therefore $(L^\glob)^\perp$ consists of pairs $(\phi,\phi_n)\in \Phi_\gamma$ for which the 1-forms
\begin{eqnarray}\label{coiso_alpha} \alpha&=&-\frac{1}{2}\;dt\; (-(1-\sin(2\theta))\dd_t \phi - v\cos(2\theta) \phi_n) \in \Omega^1(\gamma),\\ \label{coiso_beta}
\beta&=&\frac12\;dt\; ((1+\sin(2\theta))\dd_t \phi - v\cos(2\theta) \phi_n)\in \Omega^1(\gamma)
\end{eqnarray}
are $E_-$- and $E_+$-invariant, respectively.

The inverse of (\ref{coiso_alpha},\ref{coiso_beta}) is given by
$$d\phi= \alpha+\beta,\quad dt\; \phi_n=-\frac{1}{v}\left( \cot(\theta-\pi/4)\,\alpha+ \cot(\theta+\pi/4) \,\beta \right)$$

The map $\rho:\Phi_\gamma \ra \Omega^1(\gamma)\times \Omega^1(\gamma)$ sending $(\phi,\phi_n)\mapsto (\alpha,\beta)$, as defined by (\ref{coiso_alpha},\ref{coiso_beta}), has image
\be \mr{im}(\rho)=\{(\alpha,\beta)\in \Omega^1(\gamma)\times \Omega^1(\gamma)\;|\; \alpha+\beta\in \Omega^1_\mr{exact}(\gamma), \; \alpha\; \mr{vanishes\;on}\; I_-, \; \beta\; \mr{vanishes\; on}\; I_+\}\label{coiso_im}\ee
and kernel
\be \ker(\rho)=\{(\phi,\phi_n)\in \Omega^0_\mr{closed}(\gamma)\times \{0\}\} \label{coiso_ker}\ee


On the other hand, the value of $\rho$ on $(\phi,\phi_n)\in L^\glob$ is $(\alpha,\beta)=( df,  dg)$, where $f,g$ are the $E_\mp$-invariant parts of $\phi$ as in r.h.s. of (\ref{L}). Thus for the restriction of $\rho$ to $L^\glob$ we have
\be \mr{im}(\rho|_{L^\glob})= \Omega^1_\mr{exact}(\gamma)^{E_-}\times \Omega^1_\mr{exact}(\gamma)^{E_+} \label{coiso rho(L)}\ee
and the kernel is
$$\ker(\rho|_{L^\glob})= \ker(\rho)\cap L^\glob= \{(\phi,\phi_n)=(C,0)\;|\; C\in \RR\}$$

By 
(\ref{coiso_omega}),
\be (L^\glob)^\perp=\rho^{-1}(\Omega^1(\gamma)^{E_-}\times \Omega^1(\gamma)^{E_+}) \label{(L^glob)^perp}\ee
in particular, due to (\ref{coiso_im}),
\be \mr{im}(\rho|_{(L^\glob)^\perp})= \{(\alpha,\beta)\in \Omega^1(\gamma)^{E_-}\times \Omega^1(\gamma)^{E_+}\;|\; \alpha+\beta\in \Omega^1_\mr{exact}(\gamma)\} \label{coiso rho(L^perp)}\ee
and $\ker(\rho|_{(L^\glob)^\perp})=\ker(\rho)$, cf. (\ref{coiso_ker}). Therefore, the quotient $(L^\glob)^\perp/L^\glob$ fits into the short exact sequence
\be \Omega^0_\mr{closed}(\gamma)/\{\mr{constants}\}\hookrightarrow (L^\glob)^\perp/L^\glob \stackrel{\rho}{\twoheadrightarrow}
\rho((L^\glob)^\perp)/\rho(L^\glob) \label{coiso SES}\ee
The space on the left here is $(N-1)$-dimensional. To find $\dim((L^\glob)^\perp/L^\glob)$, we need to find the dimension of the space on the right.

Define the map $\sigma: \Omega^1(\gamma)\times \Omega^1(\gamma)\ra \RR^{2N}$ sending two 1-forms on $\gamma$ to the set of their periods around the connected components of $\gamma$,
$$(\alpha,\beta)\mapsto \left(\oint_{\gamma_1} \alpha,\cdots ,\oint_{\gamma_N} \alpha, \oint_{\gamma_1} \beta,\cdots, \oint_{\gamma_N} \beta\right)$$
The kernel of $\sigma$ is $\ker(\sigma)=\Omega^1_\mr{exact}(\gamma)\times \Omega^1_\mr{exact}(\gamma)$. Note that by (\ref{coiso rho(L)},\ref{coiso rho(L^perp)}), this implies
$\ker(\sigma)\cap \rho((L^\glob)^\perp)=\rho(L^\glob)$. Thus $\sigma$ induces an injective map $\sigma: \rho((L^\glob)^\perp)/\rho(L^\glob)\hookrightarrow \RR^{2N}$. Its image is
\begin{multline} \label{coiso im(sigma)}\sigma(\rho((L^\glob)^\perp)/\rho(L^\glob))=\\
=\{(a_1,\ldots,a_N,b_1,\ldots, b_N)\;|\; \sum_{i=1}^N a_i = \sum_{i=1}^N b_i = 0,\; a_1+b_1=0,\ldots, a_N+b_N=0 \} \end{multline}
Here the relations $\sum_i a_i = \sum_i b_i =0$ arise because $\int_\gamma \alpha =0$ for $\alpha\in \Omega^1(\gamma)^{E_\pm}$, since the involutions $E_\pm$ are orientation-reversing. The relations $a_i+b_i=0$ arise because of the relation $\alpha+\beta\in \Omega^1_\mr{exact}(\gamma)$ in (\ref{coiso rho(L^perp)}). The dimension of the right hand side of (\ref{coiso im(sigma)}) is $2N-(N+2)+1 = N-1$ (since there are $N+2$ relations and one relation between relations, $\left(\sum_i a_i\right) + \left(\sum_i b_i\right) - \sum_i (a_i+b_i)$=0). Hence, $\dim \rho((L^\glob)^\perp)/\rho(L^\glob)=N-1$ and, by (\ref{coiso SES}), $\dim((L^\glob)^\perp/L^\glob)=2\,(N-1)$.

\end{proof}

\begin{lemma} \label{lemma L^glob symp}
\begin{enumerate}[(i)]
\item \label{lemma L^glob symp (i)} The quotient $(L^\glob)^\perp/L^\glob$ inherits a non-degenerate symplectic pairing from $\Phi_\gamma$.
\item \label{lemma L^glob symp (ii)} The symplectic double orthogonal to $L^\glob$ in $\Phi_\gamma$ is $(L^\glob)^{\perp\perp}=L^\glob$.
\end{enumerate}
\end{lemma}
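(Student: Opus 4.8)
The plan is to prove the two parts of Lemma~\ref{lemma L^glob symp} as essentially formal consequences of the dimension computation in Lemma~\ref{lemma L^glob dim}, combined with the fact that $L^\glob$ is isotropic (noted after Lemma~\ref{lemma L iso}). For part~\eqref{lemma L^glob symp (i)}, the key observation is the general symplectic-linear-algebra fact: for any isotropic subspace $W$ of a symplectic vector space $(V,\omega)$, the symplectic form $\omega$ descends to a well-defined bilinear form on the quotient $W^\perp/W$, and this induced form is automatically non-degenerate. I would first recall why $\omega$ descends: since $W\subseteq W^\perp$ (isotropy), the form $\omega$ restricted to $W^\perp\times W^\perp$ annihilates $W$ in either argument (by the very definition of $W^\perp$), hence factors through the quotient. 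The resulting pairing on $W^\perp/W$ is skew-symmetric, and I would argue its non-degeneracy directly: if $[x]\in W^\perp/W$ pairs to zero with every $[y]\in W^\perp/W$, then $x\in (W^\perp)^\perp=W$, so $[x]=0$. Applying this with $W=L^\glob$ and $V=\Phi_\gamma$ (which is symplectic by Proposition~\ref{prop: omega non-deg}) gives part~\eqref{lemma L^glob symp (i)} immediately.

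For part~\eqref{lemma L^glob symp (ii)}, I would establish $(L^\glob)^{\perp\perp}=L^\glob$. The inclusion $L^\glob\subseteq (L^\glob)^{\perp\perp}$ is general and holds for any subspace. For the reverse inclusion, the natural route is a dimension count, but since these are infinite-dimensional Fr\'echet spaces I would phrase it in terms of the finite-dimensional quotients that have already been computed. Concretely, $L^\glob$ isotropic gives $L^\glob\subseteq (L^\glob)^\perp$, and dually $(L^\glob)^{\perp\perp}\subseteq (L^\glob)^\perp$; combined with $L^\glob\subseteq (L^\glob)^{\perp\perp}$ we get a chain $L^\glob\subseteq (L^\glob)^{\perp\perp}\subseteq (L^\glob)^\perp$. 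By Lemma~\ref{lemma L^glob dim} the outer quotient $(L^\glob)^\perp/L^\glob$ is finite-dimensional of dimension $2(N-1)$, so it suffices to show that the image of $(L^\glob)^{\perp\perp}$ in this quotient is zero. An element $[x]$ of $(L^\glob)^{\perp\perp}/L^\glob$ lies in the \emph{radical} of the induced symplectic form on $(L^\glob)^\perp/L^\glob$, because $x$ pairs to zero with all of $(L^\glob)^\perp\supseteq$ everything representing the quotient; but part~\eqref{lemma L^glob symp (i)} says that induced form is non-degenerate, so its radical vanishes, forcing $[x]=0$ and hence $(L^\glob)^{\perp\perp}=L^\glob$.

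The main subtlety I anticipate is the infinite-dimensionality: the usual finite-dimensional identities $\dim W + \dim W^\perp = \dim V$ and $W^{\perp\perp}=W$ are \emph{not} automatic for weakly non-degenerate forms on Fr\'echet spaces, so I must avoid invoking them blindly. The safe strategy is to lean only on the genuinely dimension-\emph{independent} arguments (descent of $\omega$ to $W^\perp/W$, non-degeneracy of the induced form via $(W^\perp)^\perp=W^{\perp\perp}$) together with the \emph{already-verified} finiteness of $(L^\glob)^\perp/L^\glob$ from Lemma~\ref{lemma L^glob dim}. In particular, part~\eqref{lemma L^glob symp (ii)} should be deduced from part~\eqref{lemma L^glob symp (i)} using only the radical-vanishing argument above, rather than from any dimension-counting formula that would require knowing $\dim \Phi_\gamma$. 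I expect the bookkeeping around which symplectic complement is ``strong'' versus ``weak'' (i.e.\ ensuring $(L^\glob)^\perp$ is computed with respect to the honest pairing $\omega$ rather than a completed dual) to be the one place demanding care, but the explicit description of $(L^\glob)^\perp$ via the map $\rho$ in the proof of Lemma~\ref{lemma L^glob dim} already pins this down concretely, so the argument closes.
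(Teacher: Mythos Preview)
Your argument is circular. In your proof of \eqref{lemma L^glob symp (i)} you write: ``if $[x]\in W^\perp/W$ pairs to zero with every $[y]$, then $x\in (W^\perp)^\perp=W$''. The equality $(W^\perp)^\perp=W$ is precisely statement \eqref{lemma L^glob symp (ii)}, which you then turn around and deduce from \eqref{lemma L^glob symp (i)}. In fact, for an isotropic subspace $W$ of a weakly symplectic space, the two assertions are \emph{equivalent}: the radical of the induced form on $W^\perp/W$ is exactly $W^{\perp\perp}/W$, so non-degeneracy of the reduced pairing holds if and only if $W^{\perp\perp}=W$. You yourself flag that $W^{\perp\perp}=W$ is not automatic in the Fr\'echet setting, yet you invoke it in the very first step. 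No amount of bookkeeping about strong versus weak orthogonals, or appeal to the finite dimension $2(N-1)$ of the quotient, breaks this loop: some genuinely new input is required.

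The paper supplies that input by computing the induced pairing $\underline\omega$ on $(L^\glob)^\perp/L^\glob$ explicitly. Using the short exact sequence
\[
H^0(\gamma)/\RR \hookrightarrow (L^\glob)^\perp/L^\glob \twoheadrightarrow H^1(\gamma)\big|_{\int_\gamma=0}
\]
extracted from the proof of Lemma~\ref{lemma L^glob dim}, together with the identity $\omega((\phi,\phi_n),(\psi,\psi_n))=\int_\gamma \phi(-\alpha_\psi+\beta_\psi)-\psi(-\alpha_\phi+\beta_\phi)$, one sees that after choosing a splitting the matrix of $\underline\omega$ has off-diagonal blocks given by the Poincar\'e pairing $H^0(\gamma)\otimes H^1(\gamma)\to\RR$, which is non-degenerate. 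That concrete computation is what establishes \eqref{lemma L^glob symp (i)} without presupposing \eqref{lemma L^glob symp (ii)}; only afterwards does \eqref{lemma L^glob symp (ii)} follow, by exactly the radical argument you outlined.
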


\begin{proof}
It follows from the proof of Lemma \ref{lemma L^glob dim} that $(L^\glob)^\perp/L^\glob$ fits into the following exact sequence:
\be \RR\ra H^0(\gamma)\ra (L^\glob)^\perp/L^\glob \ra H^1(\gamma)\ra \RR \label{L^perp/L LES}\ee
Here the maps, going from left to right, are:
\begin{itemize}
\item realization of constants as constant functions on $\gamma$,
\item realization of locally constant functions on $\gamma$ as elements of $(L^\glob)^\perp$ (with vanishing $\phi_n$),
\item map $\sigma\circ\rho : (L^\glob)^\perp/L^\glob \ra H^1(\gamma)\times H^1(\gamma)$ composed with projection to the first factor,
\item pairing with fundamental class of $\gamma$.
\end{itemize}
The symplectic structure $\omega$ on $\Phi_\gamma$ induces a well defined pairing $\underline{\omega}$ on $(L^\glob)^\perp/L^\glob$.
Using the truncation of sequence (\ref{L^perp/L LES})
\be H^0(\gamma)/\RR\ra (L^\glob)^\perp/L^\glob \ra  H^1(\gamma)|_{\int_\gamma=0} \label{L^perp/L SES}\ee
and the fact that symplectic structure (\ref{omega as a pairing}) can be written as
$$\omega((\phi,\phi_n),(\psi,\psi_n))=\int_\gamma  \phi (-\alpha_\psi+\beta_\psi)-\psi (-\alpha_\phi+\beta_\phi)$$
we see that, choosing some splitting of (\ref{L^perp/L SES}) from the right, we can write the block matrix of $\underline\omega$ as
\be \left(\begin{array}{ll} 0 & -2\langle,\rangle \\ 2\langle,\rangle &  \ast \end{array}\right) \label{L^perp/L omega}\ee
where the first and second row/column correspond to the left and right terms of (\ref{L^perp/L SES}) respectively; $\langle,\rangle$ is the non-degenerate pairing between the left and right terms of (\ref{L^perp/L SES}) induced from Poincar\'e duality $H^0(\gamma)\otimes H^1(\gamma)\ra \RR$; the lower right block is dependent on the choice of splitting of (\ref{L^perp/L SES}). Ansatz (\ref{L^perp/L omega}) implies that the anti-symmetric pairing $\underline\omega$ on $L^\perp/L$ is \textit{non-degenerate}. Thus $(L^\glob)^\perp/L^\glob$ is the symplectic reduction of $L^\perp$ and $\underline\omega$ is the induced symplectic structure on reduction. Non-degeneracy of $\underline\omega$ also immediately implies that $(L^\glob)^{\perp\perp}=L^\glob$.

\end{proof}

\textit{Proof of theorem \ref{thm}.}
The map $\rho: F_\gamma\ra \Omega^1(\gamma)\times \Omega^1(\gamma)$ defined in the proof of Lemma \ref{lemma L^glob dim} sends $(\phi,\phi_n)\in L$ to $(\, \kappa|_\gamma, \, \lambda|_\gamma)$, where $\kappa,\lambda$ are closed $\dd_\mp$-horizontal 1-forms corresponding to $(\phi,\phi_n)$ by (\ref{EL non simply connected}). Thus the image of $\rho$ on $L$ is
\be
\rho(L)= \{(\alpha,\beta)\in \Omega^1(\gamma)^{E_-}\times\Omega^1(\gamma)^{E_+}\;|\; \alpha+\beta\in \Omega^1_\mr{exact}(\gamma)\}
\label{rho(L)}\ee
Hence, by (\ref{(L^glob)^perp}), $L\subset (L^\glob)^\perp$. Taking into account isotropicity of $L$, we have a sequence of inclusions
$$L^\glob\subset L\subset L^\perp \subset (L^\glob)^\perp$$
Passing to the symplectic reduction (quotient by $L^\glob$) we get
$$L/L^\glob \subset L^\perp/L^\glob \subset (L^\glob)^\perp/L^\glob$$
By (\ref{L/L^glob dim}) and Lemma \ref{lemma L^glob dim}, $L/L^\glob$ is an $(N-1)$-dimensional isotropic subspace in a $2\,(N-1)$-dimensional symplectic space, hence $L/L^\glob$ is Lagrangian. Hence, $L/L^\glob=L^\perp/L^\glob$ and therefore $L=L^\perp$. This finishes the proof that $L$ is Lagrangian.
$\Box$

\section{Remarks.} \label{sec: remarks}
Unless stated otherwise, in this Section we are assuming the setup of Section \ref{sec: compact}.
\subsection{Dirichlet polarization}
It is interesting that $\dim((L^\glob)^\perp/L^\glob)$ depends only on the \textit{topology} of the domain $D$, at least as long as the mild assumptions \ref{assump A}, \ref{assump B}, \ref{assump C} hold. On the other hand, $L$ itself is sensitive to the \textit{geometry} of the boundary $\gamma$, in particular to dynamics on points of $\gamma$ defined by joint action of involutions $E_+,E_-$. In particular, for the map $\DD: L\ra \CC^\infty(\gamma)$, sending $(\phi,\phi_n)\mapsto \phi$, we have the following (we assume for simplicity that $D$ is simply connected).
    \begin{itemize}
    \item If there is a point on the boundary $p\in \gamma$ and a number $n\geq 1$ such that
    \be (E_+ E_-)^n p =p \label{periodic orbit} \ee
    then by (\ref{L}) on $L$ we have $\sum_{i=0}^{n-1}\phi((E_+ E_-)^i p)-\phi(E_-(E_+ E_-)^i p)=0$, hence $\DD$ is not surjective (equivalently, in general there is no existence for Dirichlet boundary problem for the wave equation on $D$).
    \item If there is an open subset of the boundary $U\subset \gamma- I$ such that (\ref{periodic orbit}) holds for every $p\in U$ for some fixed $n\geq 1$, then $\DD$ is not injective (no uniqueness for Dirichlet problem): for $\psi_U$ a bump function supported on $U$, we define
        $$f=\sum_{i=0}^{n-1} \left((E_+^* E_-^*)^i \psi_U +  E_-^* (E_+^* E_-^*)^i \psi_U \right) =
        \sum_{i=0}^{n-1} \left((E_-^* E_+^*)^i \psi_U +  E_+^* (E_-^* E_+^*)^i \psi_U \right)$$
        Then $f$ is simultaneously $E_+$- and $E_-$-invariant, hence by (\ref{L}),
        $$(0, -\frac{1}{v}(\cot(\theta-\pi/4)-\cot(\theta+\pi/4))\;\dd_t f)\in L$$
        is a non-zero vector in $L$ lying in kernel of $\DD$.
    \item If there is a point $p\in \gamma$, such that its orbit under the joint action of $E_+$ and $E_-$ is \textit{dense} in $\gamma$, then $\DD$ is injective (there is uniqueness for Dirichlet problem):
        by (\ref{L}), to have a vector in $L$ lying in kernel of $\DD$, we need a function $f\in \CC^\infty(\gamma)$ which is both $E_+$- and $E_-$-invariant. But $f$ has to be constant on the dense $E_\pm$-orbit in $\gamma$, thus $f$ is a constant and gives zero vector in $L$.
    \end{itemize}

\subsection{Explicit examples of involutions $E_\pm$: disk and annulus}\label{sec: examples of E_pm}
First consider a unit disk on $\RR^{1,1}$, defined in polar coordinates $x=r \cos\theta$, $y=r\sin\theta$ by $r\leq 1$ with the boundary unit circle parameterized by the angular coordinate $t=\theta\in \RR/(2\pi\ZZ)$.\footnote{Note that this convention agrees with conventions introduced in the beginning of Section \ref{sec: compact}, but now $\theta$ is to be considered modulo $2\pi$, not modulo $\pi$.}
The four light-like points on the boundary are: $$I=\{\underbrace{\pi/4,-3\pi/4}_{I_-},\underbrace{-\pi/4,3\pi/4}_{I_+}\}$$
and the involutions $E_\pm$ on the boundary circle are:
$$E_-: \theta\leftrightarrow \pi/2-\theta,\qquad E_+: \theta\leftrightarrow -\pi/2-\theta$$

Next, consider the annulus defined by $r_1\leq r\leq r_2$. We consider both inner and outer circle parameterized by the angular coordinate $\theta$. We will put superscripts ``in'', ``out'' to indicate to which boundary component a point belongs. The eight light-like boundary points are:
$$I=\{\underbrace{(\pi/4)^\In,(-3\pi/4)^\In,(\pi/4)^\Out,(-3\pi/4)^\Out}_{I_-}, \underbrace{(-\pi/4)^\In,(3\pi/4)^\In,(-\pi/4)^\Out,(3\pi/4)^\Out}_{I_+}\}$$
The involutions are:
\be \label{annulus involutions 1}
E_\pm: \theta^\In\leftrightarrow \left(\mp\frac{\pi}{4}+\arccos \left(\frac{r_1}{r_2}\cos(\theta\pm \frac\pi4)\right)\right)^\Out \ee
\be \label{annulus involutions 2}
E_\pm: \theta^\Out\leftrightarrow \left(\mp \frac{\pi}{2}-\theta\right)^\Out \mbox{ for }\\
 \theta^\Out\in \left(\mp\frac\pi4-\theta_0,\mp\frac\pi4+\theta_0\right)^\Out\cup \left(\pm\frac{3\pi}4-\theta_0,\pm\frac{3\pi}4+\theta_0\right)^\Out \ee
where $\theta_0=\arccos\frac{r_1}{r_2}$ and the sign of $\arccos$ in (\ref{annulus involutions 1}) is chosen in such a way that in the limit $r_1\ra r_2$ we get the 
involution
$\theta^\In\leftrightarrow \theta^\Out$.
For each choice of the sign $\pm$, the equivalence relation $\EE_\pm$ has two equivalence classes of order 1:
$\{(\mp\pi/4)^\Out\}$, $\{(\pm 3\pi/4)^\Out\}$ and two equivalence classes of order 3:
$$\{\left(\mp\frac\pi4\right)^\In, \left(\mp\frac\pi4-\theta_0\right)^\Out, \left(\mp\frac\pi4+\theta_0\right)^\Out\}, \quad
\{\left(\pm\frac{3\pi}4\right)^\In, \left(\pm\frac{3\pi}4-\theta_0\right)^\Out, \left(\pm\frac{3\pi}4+\theta_0\right)^\Out\}$$
Elements of the latter classes correspond to points of the boundary where involution $E_\pm$ is discontinuous. All the other equivalence classes are of order 2.

\subsection{Constraint (Cauchy) subspace of the phase space}
\label{sec: C}
Fix a closed curve $\gamma\subset \RR^{1,1}$ subject to assumptions \ref{assump A}, \ref{assump B}, \ref{assump C} of Section \ref{sec: compact}. Denote $D_\mr{in}$ the compact domain of $\RR^{1,1}$ bounded by $\gamma$ and denote $D_\mr{out}$ the complement of $D_\mr{in}$ in $\RR^{1,1}$.

By specializing a general construction of \cite{CMR2}, one can associate to $\gamma$  two 
subspaces of the phase space  $C_\mr{in},C_\mr{out}\subset \Phi_\gamma$ consisting of pairs $(\phi,\phi_n)\in \CC^\infty(\gamma)^{\times 2}$ extendable as solutions of the wave equation into some open neighborhood of $\gamma$ in $D_\mr{in}$ or $D_\mr{out}$ respectively. Note that one can view $C_\mr{in}$, $C_\mr{out}$ as being associated to the two orientations of $\gamma$:
$C_\mr{in}=C(\gamma)\subset (\Phi_\gamma,\omega_\gamma)$, $C_\mr{out}=C(\gamma^\mr{op})\subset (\Phi_{\gamma^\mr{op}}=\Phi_\gamma,\omega_{\gamma^\mr{op}}=-\omega_\gamma)$, where $\gamma$ is understood as coming with counterclockwise orientation by default and ``op'' denotes orientation reversal.

\begin{remark}
A related concept to the Cauchy subspaces $C_\In$, $C_\Out$ introduced above is the subspace $\mathsf{C}$ of $\Phi_\gamma$ consisting of pairs $(\phi,\phi_n)$ extendable as solutions of the wave equation into a tubular neighborhood of $\gamma$ in $\RR^{1,1}$ (as opposed to an open neighborhood in the relative topology of $D_\In$ or $D_\Out$). Obviously, $\mathsf{C}=C_\In\cap C_\Out$.
\end{remark}

We split light-like points of $\gamma$ into those where $D_\mr{in}$ is convex and those where $D_\mr{out}$ is convex: $I=I^\mr{in}\sqcup I^\mr{out}$. We also introduce involutions $E_\pm^\mr{in}$, $E_\pm^\mr{out}$ on points of $\gamma$, induced by following light-like lines in $D_\mr{in}$ or $D_\mr{out}$ respectively. Note that since $D_\mr{out}$ is non-compact, a light-like line starting at a point on $\gamma$ may run to infinity, thus involutions $E_\pm^\mr{out}$ are only defined on some subsets of $\gamma$. In particular, $E_\pm^\mr{out}$ is defined in an open neighborhood of points of $I^\mr{out}_\pm$ (with the same $\pm$).

\begin{proposition}\label{prop: C}
\begin{enumerate}
\item
The subspace $C_\mr{in}\subset \Phi_\gamma$ consists of pairs $(\phi,\phi_n)\in \CC^\infty(\gamma)^{\times 2}$
such that:
\begin{enumerate}
\item \label{C1} For every point $z\in I_\pm^\mr{in}$ there is an $E_\pm^\mr{in}$-invariant open neighborhood $z\in U_z\subset\gamma$ such that the restriction of the 1-form $\rho_\pm\in \Omega^1(\gamma)$ to $U_z$ is
$E_\pm^\mr{in}$-invariant. Here $\rho_+:=\beta$, $\rho_-:=\alpha$ are the two 1-forms on $\gamma$ defined by (\ref{coiso_alpha},\ref{coiso_beta}).
\item \label{C2} For every point $z\in I_\pm^\mr{out}$ the $\infty$-jet of the 1-form $\rho_\pm\in \Omega^1(\gamma)$ at $z$ is
$E_\pm^\mr{out}$-invariant.
\end{enumerate}
The second subspace $C_\mr{out}\subset \Phi_\gamma$ is described similarly where we should interchange superscripts ``in'' and ``out'' in the description of constraints (\ref{C1},\ref{C2}) above.
\item Subspaces $C_\mr{in},C_\mr{out}\subset\Phi_\gamma$ are symplectic w.r.t. symplectic form $\omega$ on $\Phi_\gamma$. Symplectic orthogonals to $C_\mr{in},C_\mr{out}$ in $\Phi_\gamma$ are zero.
\end{enumerate}
\end{proposition}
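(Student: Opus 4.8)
The plan is to establish the characterization in part (1) by reducing extendability to a local problem along $\gamma$ and solving the characteristic Cauchy problem point by point, and then to deduce part (2) from (1) by a localization argument. For part (1), near $\gamma$ I would write a bulk solution as $d\phi=\kappa+\lambda$ with $\kappa,\lambda$ closed and $\iota_{\dd_-}\kappa=\iota_{\dd_+}\lambda=0$ as in (\ref{EL non simply connected}); the restrictions $\rho_-=\alpha=\kappa|_\gamma$ and $\rho_+=\beta=\lambda|_\gamma$ are recovered from $(\phi,\phi_n)$ by (\ref{coiso_alpha},\ref{coiso_beta}). Extendability into an open neighborhood of $\gamma$ in $D_\In$ is then equivalent to the existence of $\dd_-$- and $\dd_+$-constant potentials $F,G$ on that neighborhood with $dF|_\gamma=\alpha$, $dG|_\gamma=\beta$, i.e. to integrating $\alpha,\beta$ along the two families of light-like characteristics into $D_\In$. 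As this is a local condition, I would patch it from neighborhoods of individual boundary points.

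At a non-light-like point $\gamma$ is transversal to both characteristic directions, so the Cauchy problem is unobstructed and every $(\phi,\phi_n)$ extends locally. The two cases $I^\In$ and $I^\Out$ are distinguished by the connectivity, inside $D_\In$ near a light-like point $z$, of the level sets of the degenerate characteristic coordinate. Modelling $\gamma$ near $z\in I_+$ as a curve tangent to the characteristic $\sigma_-=0$, with $\sigma_-=h(\sigma_+)$, $h(0)=h'(0)=0$ and $h''(0)\neq0$ by assumption \ref{assump C}, the relevant potential is $G=G(\sigma_-)$. When $D_\In$ is convex at $z$ (so $z\in I_+^\In$), each level $\sigma_-=\mr{const}>0$ meets $D_\In$ in a single segment whose two endpoints on $\gamma$ are swapped by $E_+^\In$; integrating $\beta$ then forces $G$ to be single-valued, which is exactly $E_+^\In$-invariance of $\beta$ on a neighborhood $U_z$, giving (\ref{C1}). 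When instead $D_\Out$ is convex at $z$ (so $z\in I_+^\Out$), each level $\sigma_-=\mr{const}>0$ meets $D_\In$ in two disjoint rays, so no matching is forced away from $z$; the two rays merge only at $\sigma_-=0$, i.e. at $z$, and smoothness of $G$ across this merger forces precisely the $\infty$-jets of $\beta$ on the two branches at $z$ -- which are related by $E_+^\Out$ -- to agree, giving (\ref{C2}). The points of $I_-$ are treated identically under $\dd_+\leftrightarrow\dd_-$, $F\leftrightarrow G$, $\alpha\leftrightarrow\beta$, and $C_\Out$ by interchanging $D_\In$ and $D_\Out$.

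The main obstacle is the $I^\Out$ analysis, and in particular the sufficiency direction of (\ref{C2}): given boundary data whose $\infty$-jet at $z$ is $E_+^\Out$-invariant, I must actually construct a smooth $\dd_+$-constant extension into a neighborhood of $z$ in $D_\In$. The plan is to reparametrize so that $\sigma_-$ becomes a smooth coordinate on the branch-quotient, prescribe $G'$ separately on the two rays from the two branches of $\beta$, and then glue them across $\sigma_-=0$ into a single smooth function via Borel's lemma; the $E_+^\Out$-invariance of the jet is exactly what makes the two one-sided jets match so that a common smooth $G$ exists, while the residual discrepancy is flat at $z$ and can be absorbed. One must also check that the genuine involution $E_+^\Out$, rather than the model reflection $\sigma_+\mapsto-\sigma_+$, governs the matching; this is where the nonvanishing curvature hypothesis \ref{assump C} enters, guaranteeing $h''(0)\neq0$ and hence a smooth two-to-one branch structure.

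For part (2) the essential point is that the symplectic orthogonal $C_\In^\perp$ in $\Phi_\gamma$ vanishes; non-degeneracy of $\omega|_{C_\In}$ follows immediately, since any nonzero $v\in C_\In$ is then not in $C_\In^\perp$ and hence paired nontrivially with some element of $C_\In$. To show $C_\In^\perp=0$ I would exploit that the pairing (\ref{omega as a pairing}) is the integral over $\gamma$ of a local density. By the unobstructed generic case above, every pair supported on an arc $A\subset\gamma\setminus I$ lies in $C_\In$ (extend by the Cauchy solution near $A$ and by zero near $I$). If $(\psi,\psi_n)\in C_\In^\perp$, pairing against all such arc-supported data localizes the condition to $A$, and non-degeneracy of the density forces $(\psi,\psi_n)|_A=0$; letting $A$ exhaust $\gamma\setminus I$ gives vanishing on a dense set, hence everywhere by smoothness. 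The argument for $C_\Out$ is identical.
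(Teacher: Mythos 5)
Your local analysis at the boundary points is essentially sound, and it is a genuinely different route from the paper's: the paper never solves the characteristic Cauchy problem point by point, but instead inscribes a thin annulus $D\subset D_\In$ with $\dd D=\gamma'\sqcup\gamma$, invokes the description (\ref{rho(L)}) of $\rho(L_D)$ already established in the proof of Theorem \ref{thm} (invariance of $(\alpha,\beta)$ under the annulus involutions $E_\pm(D)$, with the sewing condition at the two points $x,y\in\gamma'$ hit by the tangent line through $z\in I^\Out_\pm$ yielding the jet condition), and then proves sufficiency by extending $(\alpha,\beta)$ to $E_\pm(D)$-invariant $1$-forms on all of $\dd D$ and writing the solution in closed form, $\tilde\phi(\zeta)=\phi(\zeta_0)+\int_{\zeta_0}^\zeta(p_-^*\alpha+p_+^*\beta)$, cf.\ (\ref{tilde phi}). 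Your dichotomy between $I^\In$ (level sets connected in $D_\In$, forcing invariance on a neighborhood) and $I^\Out$ (two rays merging at $z$, forcing only jet matching), and your bump-function argument for part (2), coincide with the paper's reasoning.

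However, there is one genuine gap. You assert that extendability into a neighborhood of $\gamma$ in $D_\In$ is \emph{equivalent} to the existence of single-valued potentials $F,G$ on that neighborhood with $dF|_\gamma=\alpha$, $dG|_\gamma=\beta$. This is false: the neighborhood is an annulus, and a solution $\tilde\phi$ there need not admit global single-valued $F,G$ --- this is exactly the distinction between $EL_D$ and $EL_D^\glob$ in Section \ref{sec: Evol rel non-simply connected}, encoded by (\ref{EL non simply connected}). Concretely, there exist elements of $C_\In$ with $\oint_\gamma\alpha\neq 0$ (data extendable to a collar but with multivalued $F$), and your formulation would wrongly exclude them by imposing a spurious period constraint not present in the Proposition; note that in the paper's formula the form $p_-^*\alpha$ alone may have a nonzero period, and only the \emph{sum} $p_-^*\alpha+p_+^*\beta$ is exact (because its restriction to $\gamma$ is $d\phi$). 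This defect propagates into your patching step, which is the actual crux and which you dispose of in one sentence: the gluing cannot be performed at the level of the potentials $F,G$ (the constants of integration form a \v{C}ech cocycle that need not be a coboundary), so it must be carried out for the solutions $\tilde\phi$ themselves, or for the closed horizontal $1$-forms $\kappa,\lambda$; moreover one must coordinate the collar thickness with the invariance neighborhoods $U_z$ of (\ref{C1}) and make the undetermined ``shadow'' data near distinct points of $I^\Out$ consistently and independently, checking that the shadow leaves stay inside single charts. None of this is fatal --- the repair is exactly the paper's annulus construction --- but as written your reduction proves a different (strictly smaller) subspace than $C_\In$.
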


\begin{proof}
To prove necessity of constraints (\ref{C1},\ref{C2}), assume that a pair $(\phi,\phi_n)\in \Phi_\gamma$ comes from a solution $\tilde\phi$ of the wave equation on an open  neighborhood $V$ of $\gamma$ in $D_\mr{in}$. We can fit into $V$ a topological annulus $D\subset V$ with boundary $\dd D=\gamma'\sqcup\gamma$. The associated involutions $E_\pm(D)$ on $\dd D$ coincide with $E_\pm^\mr{in}(\gamma)$ on some neighborhoods of points $z\in I_\pm^\mr{in}(\gamma)$, which implies constraint (\ref{C1}) by (\ref{rho(L)}). To see (\ref{C2}), fix a sign $\pm$ and fix a point $z\in I_\pm^\mr{out}(\gamma)$. We can choose the annulus $D$ in such a way that the equivalence class of $z$ under equivalence relation $\EE_\pm(D)$ is $\{x,z,y\}$ with $x,y\in\gamma'$. Denote $U'\subset\gamma'$ an open interval on $\gamma'$ bounded by points $x,y$ (among the two possible intervals we choose the $E_\pm(D)$-invariant one). Also fix a neighborhood $U$ of $z$ in $\gamma$; point $z$ splits $U$ into two intervals, $U_{1}$ and $U_2$. Condition (\ref{C2}) on the jet of $\rho_\pm$ at $z$ arises from necessity to smoothly sew an $E_\pm$-invariant 1-form $\rho_\pm$ on $U'$ with $E_\pm^*(\rho_\pm|_{U_1})$ at point $x$ and with $E_\pm^*(\rho_\pm|_{U_2})$ at point $y$.

Conversely, to check sufficiency of (\ref{C1},\ref{C2}), fix $(\phi,\phi_n)\in \CC^\infty(\gamma)^{\times 2}$ satisfying (\ref{C1},\ref{C2}) and fix an annulus $D\subset D_\mr{in}$ with boundary $\dd D=\gamma\sqcup\gamma'$, thin enough, so that for every $z\in I_\pm^\mr{in}(\gamma)$, the neighborhood $U_z$ where we have $E_\pm^\mr{in}$-invariance of $\rho_\pm$ contains the maximal $E_\pm(D)$-invariant neighborhood 
of $z$. Then (\ref{C1}) ensures that 1-form $E_\pm^*(\rho_\pm|_\gamma)$ is smooth on the image of $\gamma$ in $\gamma'$ under $E_\pm$
and (\ref{C2}) ensures that it can be extended to a smooth $E_\pm^\mr{out}(\gamma')$-invariant 1-form on $\gamma'$. Thus we obtain a pair $(\alpha,\beta)\in \Omega^1(\dd D)^{E_-}\times \Omega^1(\dd D)^{E_+}$ which restricts to $(\rho_-,\rho_+)$ on $\gamma$. Then we construct the solution $\tilde\phi$ of wave equation in $D$ as
\be \tilde\phi(\zeta)=\phi(\zeta_0)+\int_{\zeta_0}^\zeta (p_-^*\alpha + p_+^*\beta),\quad \zeta\in D \label{tilde phi}\ee
where $\zeta_0$ is some arbitrary chosen point on $\gamma$, $p_\pm:D\ra \dd D/ E_\pm $ are projections from $D$ to the boundary along light-like lines. Integration path from $\zeta_0$ to $\zeta$ in $D$ is chosen arbitrarily (the integrand is exact since it is closed and restricts to an exact 1-form on one of the two boundary components). By construction, $\tilde\phi$ induces back $(\phi,\phi_n)$ on $\gamma$.

The case of $C_\mr{out}$ is treated similarly.

The calculation of the symplectic orthogonal to $C_\mr{in}$ (case of $C_\mr{out}$ is analogous) in $\Phi_\gamma$ follows the proof of Proposition \ref{prop: omega non-deg}.
We can choose in (\ref{omega as a pairing}) $(\psi,\psi_n)\in C_\mr{in}$ with $\psi_n$ a bump function in neighborhood of any point $z\in\gamma-I$ and vanishing in some open neighborhood of every point of $I$ and $\psi=0$. This proves that $(\phi,\phi_n)\in C_\mr{in}^\perp$ has $\phi(z)=0$. Next, choosing $\psi$ a bump function as above and $\psi_n=0$ we prove that $\phi_n(z)=0$. Thus $C_\mr{in}^\perp=0$. This also implies that $C_\mr{in}$ is symplectic.
\end{proof}

\begin{remark}
Note that $C_\mr{in},C_\mr{out}$ cannot be described in intrinsic terms of $\gamma$, using only the geometric data $(\Gamma,u,\mu)$ as introduced in Section \ref{sec: free boson}: we need more detailed information on the behavior of the metric near $\gamma$ (since we need to know the involutions $E_\pm^{\mr{in},\mr{out}}$ near light-like points of $\gamma$).
\end{remark}

\begin{remark}\label{rem: top annulus C}
Let $D\subset \RR^{1,1}$ be a (topological) annulus bounded by $\dd D=\gamma=\gamma_1\sqcup \gamma_2$ subject to conditions of Section \ref{sec: compact}. We assume that $\gamma_1$ is the inner boundary component and $\gamma_2$ the outer one. Then by Theorem \ref{thm}, the corresponding $L$ is a canonical relation $L\subset C_\mr{out}(\gamma_1)\times C_\mr{in}(\gamma_2)$. Denoting $p_{1,2}$ the projections to the first and second factors in $C_\mr{out}(\gamma_1)\times C_\mr{in}(\gamma_2)$, $p_{1}$ is never injective\footnote{
Indeed, we can take an open subset $U\subset \gamma_2$ such that $E_+(U)\subset \gamma_2$, and a bump function $\psi_U$ on $U$. Then we construct a nonzero element $\psi_U+ E_+^*\psi_U \in L\cap\ker{p_1}$.
}
on $L$ and $p_2$ is never surjective\footnote{
The reason is that $p_2(L)$ is given by $E_\pm$-invariance constraint for 1-forms $\rho_\pm|_{\gamma_2}$ on certain \emph{finite} open subsets  $U_\pm$ of $\gamma_2$, whereas $C_\mr{in}(\gamma_2)$ is given by the constraint (\ref{C1}) of Proposition \ref{prop: C} on \emph{arbitrarily small} neighborhoods of light-like points of $\gamma_2$.
}. Moreover, $p_1$ is surjective and $p_2$ is injective if and only if the following condition holds:
$$I^\mr{out}(\gamma_1)=\varnothing$$
Note that $L$ cannot be a graph of a map $C_\mr{out}(\gamma_1)\ra C_\mr{in}(\gamma_2)$ (nor in the opposite direction).
\end{remark}

\subsection{Conformal invariance}
In case $\dim M=2$, the action (\ref{S}) is invariant under Weyl transformations -- local rescaling of metric $g_{\mu\nu}(x)\ra \Omega(x)\cdot g_{\mu\nu}(x)$ with $\Omega\in \CC^\infty(M)$, $\Omega>0$ . Hence for $F: (M,g)\ra (M',g')$ a conformal diffeomorphism (i.e. $F^*g'=\Omega\cdot g$ with $\Omega> 0$) of 2-dimensional pseudo-Riemannian manifolds, we have $S_{M',g'}(\phi')=S_{M,g}(F^*\phi')$. Thus $F$ induces a symplectomorphism of phase spaces $F^*:\Phi_{\dd M'}\ra \Phi_{\dd M}$ which takes $L_{M'}$ to $L_M$.

In particular, in case of domains $D$ in $\RR^{1,1}$, pairs of a symplectic manifold and an Lagrangian submanifold $(\Phi_{\dd D}, L_D)$ are canonically isomorphic for domains $D$ related by a conformal transformation of $\RR^{1,1}$, e.g. a translation, a Lorentz boost or a rescaling.

Also, Theorem \ref{thm} implies that for $D\subset \RR^2$ a compact domain on the plane endowed with some Lorentzian metric $g_D$, conformally equivalent to a domain $D'\subset \RR^{1,1}$ with Minkowski metric, $L_{D}\subset\Phi_{\dd D}$ is Lagrangian.

\subsection{Hamiltonian for a circle}\label{sec: Hamiltonian for a circle}
Consider 
polar coordinates\footnote{We are using an unconventional radial coordinate, since this choice makes rescaling a translation in $\xi$. In this Section we will sometimes refer to $\xi$ as the ``time'', as the parameter of Hamiltonian dynamics.} $(\xi,\theta)\in \RR\times \RR/(2\pi\ZZ)$ on $\RR^{1,1}$,
$x= e^\xi \cos\theta,\; y= e^\xi\sin\theta$. Phase spaces $\Phi_{S^1_{\xi_0}}$ for circles given by $\xi=\xi_0$ are canonically symplectomorphic for different values of $\xi_0$ by conformal invariance.

For a circle centered at the origin, define a function on the phase space
\be\label{H}
H=\frac{1}{2}\oint_{S^1}d\theta\;\cos(2\theta)\;((\phi_n)^2+(\dd_\theta\phi)^2)\quad \in \CC^\infty(\Phi_{S^1})
\ee
where\footnote{This is a different normalization of the transversal vector field than in Section \ref{sec: compact}. The reason for this choice is that the isomorphism $\Phi_{S^1_{\xi_0}}\simeq \Phi_{S^1_{\xi_1}}$ coming from conformal invariance in these coordinates is just $(\phi,\phi_n)\mapsto (\phi,\phi_n)$.} $\phi_n=\dd_\xi \phi|_{S^1}$.
It generates a Hamiltonian vector field $\check{H}
$
defined by $\iota_{\check{H}}\omega=-\delta H$. Explicitly:
\be
\check{H}=\oint_{S^1} \phi_n\;\frac{\delta}{\delta\phi}+\frac{1}{\cos(2\theta)} (\sin(2\theta)\dd_\theta \phi_n+\dd_\theta(\sin(2\theta) \phi_n+\cos(2\theta)\dd_\theta\phi))\;\frac{\delta}{\delta\phi_n} \label{check H}\ee
Then the infinitesimal evolution in $\xi$
is given by the flow equation for $\check{H}$:
$$\dd_\xi\phi=\check{H}\circ\phi,\quad \dd_\xi\phi_n=\check{H}\circ\phi_n$$
-- this is just an equivalent restatement of the wave equation (\ref{wave eq covariant}) in coordinates $(\xi,\theta)$.

One way to get the function (\ref{H}) is to consider the radial density $\mathcal{L}\in \CC^\infty(\Phi_{S^1})$ of action (\ref{S}) in an annulus $\mr{Ann}_{\xi_0}^{\xi_1}$ defined by $\xi_0\leq\xi\leq \xi_1$,
$$S=\int_{\xi_0}^{\xi_1}d\xi\; \mathcal{L}(\phi|_{S^1_\xi},\dd_\xi\phi|_{S^1_\xi}),\quad
\mathcal{L}=\oint_{S^1}d\theta\;\frac{1}{2}\;\left(\cos(2\theta)\; (\phi_n^2-(\dd_\theta\phi)^2)-\sin(2\theta)\;\phi_n\,\dd_\theta\phi\right)$$
Then one defines
\be H=\oint_{S^1}d\theta\; \phi_n\;\frac{\delta \mathcal{L}}{\delta \phi_n}-\mathcal{L} \label{Legendre}\ee
which yields (\ref{H}). Note that (\ref{Legendre}) is indeed the formula for Legendre transform, but we do not switch to canonical momenta $p=\frac{\delta \mathcal{L}}{\delta \phi_n}=\cos(2\theta)\phi_n-\sin(2\theta)\dd_\theta\phi$.

The Hamiltonian vector field $\check{H}$ (\ref{check H}) is only well-defined on a subspace
$$C_0=\{(\phi,\phi_n)\in \CC^\infty(S^1)^{\times 2}\;|\; (\dd_\theta\phi_n-\dd_\theta\phi)|_{\theta\in\{\pm\frac{\pi}{4},\pm\frac{3\pi}{4}\}}=0\}\quad \subset \Phi_{S^1}$$
due to $\delta H$ not being in the image of the map of vector bundles $\omega^\#:T\Phi_{S^1}\ra T^*\Phi_{S^1}$ (which is injective by weak non-degeneracy of $\omega$, but not an isomorphism) unless one restricts the base to $C_0\subset \Phi_{S^1}$.
More precisely, $\check{H}$ is defined as a section of the pullback of the tangent bundle $T\Phi_{S^1}$ to $C_0$, but it is not generally tangent to $C_0$. However, one may further restrict $\check{H}$ to a smaller subspace $C_1\subset C_0$ where it is tangent to $C_0$; subspace $C_1$ is given by certain restrictions on 3-jets of $(\phi,\phi_n)$ at light-like points on $S^1$. To find the maximal subspace of $\Phi_{S^1}$ on which $\check{H}$ is defined as a tangent vector field, one can iterate this process: cf.\ the Gotay--Nester--Hinds (GNH) geometric constraint algorithm \cite{GNH}, \cite{BPV}. This way one finds a sequence of subspaces
$\Phi_{S^1}\supset C_0\supset C_1\supset C_2\supset\cdots$ where $\check{H}$ on $C_{k+1}$ is tangent to $C_k$, with $C_k$ given by constraints on $(2k+1)$-jets of boundary fields at light-like points of $S^1$. The process does not stabilize at a finite step, and the maximal subspace where $\check{H}$ is defined as a tangent vector field is $C_\infty=\cap_k C_k$ which coincides with $C_\mr{out}(S^1)$ given by constraint (\ref{C2}) of Proposition \ref{prop: C}.

Integrating the vector field $\check{H}$ to a \emph{flow} on $C_\mr{out}(S^1)$ is equivalent to writing the evolution relation $L\subset \bar\Phi_{S^1}\times \Phi_{S^1}$ for the geometric annulus $\mr{Ann}_{0}^{\xi}$ (we are assuming $\xi>0$) as the graph of a map $F_\xi:C_\mr{out}(S^1)\ra C_\mr{out}(S^1)$. This is impossible
due to issues with existence/uniqueness for the 
initial value problem
for the wave equation on the annulus (cf. Remark \ref{rem: top annulus C}).
Specifically, projections $p_{1,2}: \Phi_{S^1}\times \Phi_{S^1}\ra \Phi_{S^1}$ restricted to $L$ yield a diagram
\be C_\mr{out}(S^1)\stackrel{p_1}{\twoheadleftarrow} L \stackrel{p_2}{\hookrightarrow} \underbrace{C_\mr{in}(S^1)}_{\subset\, C_\mr{out}(S^1)} \label{L annulus diagram}\ee
where neither map is an isomorphism.

However, the flow of $\check{H}$ in \emph{negative} time $-\xi<0$ exists as a map
$$F_{-\xi}=p_1(p_2^{-1}(\bt)\cap L):\quad C(-\xi)\ra C_\mr{out}(S^1)$$
where
$C(-\xi)$ is the subspace of $C_\mr{out}(S^1)$ defined as
\begin{multline*}
C(-\xi)=p_2(L)
=\{(\phi,\phi_n)\in \CC^\infty(S^1)^{\times 2}\;|\\
\alpha(\phi,\phi_n)\mbox{ is $E_-$-invariant on }   \underbrace{\left(\frac\pi4-\theta_0, \frac\pi4+\theta_0\right)\cup \left(-\frac{3\pi}{4}-\theta_0, -\frac{3\pi}{4}+\theta_0\right)}_{U_-(\xi)\subset S^1},\\
\beta(\phi,\phi_n)\mbox{ is $E_+$-invariant on }   \underbrace{\left(-\frac{\pi}{4}-\theta_0, -\frac\pi4+\theta_0\right)\cup \left(\frac{3\pi}{4}-\theta_0, \frac{3\pi}{4}+\theta_0\right)}_{U_+(\xi)\subset S^1}
\}
\end{multline*}
where
$p_2$ and $L$ are as in the diagram (\ref{L annulus diagram}),
$\theta_0=\arccos(e^{-\xi})$ and involutions on $S^1$ are $E_-: \theta\leftrightarrow \pi/2-\theta$, $E_+:\theta\leftrightarrow 
-\pi/2
-\theta$; $\alpha$ and $\beta$ are the 1-forms defined by (\ref{coiso_alpha},\ref{coiso_beta}). Note 
that $C(-\xi)\subset \Phi_{S^1}$ is \emph{not} a symplectic subspace; also $F_{-\xi}$ is not injective. What happens instead is that $C(-\xi)\subset \Phi_{S^1}$ is coisotropic, with
\begin{multline} \label{C(-xi) perp}
C(-\xi)^\perp=\ker F_{-\xi}=p_2(\ker p_1\cap L)=\\
=\{(\phi,\phi_n)\in \CC^\infty(S^1)^{\times 2}\mbox{ such that } \alpha|_{S^1-U_-(\xi)}=0,\; \beta|_{S^1-U_+(\xi)}=0,\\
\alpha|_{U_-(\xi)}\mbox{ is $E_-$-invariant},\; \beta|_{U_-(\xi)}\mbox{ is $E_+$-invariant},\;
\phi(\pi/4)+\int_{\pi/4}^{\pi/4+\theta_0}\alpha=0\}
\end{multline}
Formula (\ref{C(-xi) perp}) for $\ker F_{-\xi}$ follows from restricting the solution 
(\ref{tilde phi}) to the inner boundary circle. 
Coincidence of the kernel of $F_{-\xi}$ with the symplectic orthogonal to $C(-\xi)$ follows from Theorem \ref{thm}:
\begin{multline*}C(-\xi)^\perp= \{u=(\phi,\phi_n)\subset \Phi_{S^1}\;|\; \forall s\in L,\; \langle \underbrace{(0,u)}_{\in \bar\Phi_{S^1}\times \Phi_{S^1} },s\rangle_{\bar\Phi_{S^1}\times \Phi_{S^1}}\} =\\
 = p_2(L^\perp\cap\; 0\times \Phi_{S^1}) =p_2(L\cap\; 0\times \Phi_{S^1})=\ker F_{-\xi}
\end{multline*}
Thus $F_{-\xi}$ descends to the symplectic reduction $\underline{C}(-\xi)=C(-\xi)/C(-\xi)^\perp$ and yields an isomorphism 
 of symplectic spaces
$$\underline{F}_{-\xi}:\quad  \underline{C}(-\xi) \stackrel{\sim}{\ra} C_\mr{out}(S^1)$$
which is a symplectomorphism, since before reduction $F_{-\xi}$ pulls back the symplectic structure on $C_\mr{out}(S^1)$ to the presymplectic structure on $C(-\xi)$, as follows from isotropicity of $L$, the graph of $F_{-\xi}$: 
for any pair of elements 
$u,v\in C(-\xi)$ we have
$$
\langle F_{-\xi}(u),  F_{-\xi}(v) \rangle_{\Phi_{S^1}} - \langle u,v \rangle_{\Phi_{S^1}}
=-\langle \underbrace{(F_{-\xi}(u), u)}_{\in L} \, , \, \underbrace{(F_{-\xi}(v), v)}_{\in L}\rangle_{\bar\Phi_{S^1}\times \Phi_{S^1}} =0
$$
With some abuse of terminology, one may call $\underline{F}_{-\xi}$ the ``reduced flow'' 
of the Hamiltonian vector field $\check{H}$ in negative time $-\xi<0$. Then it is reasonable to define the reduced flow in positive time $\xi>0$ to be the inverse map:
$$\underline{F}_{+\xi}= (\underline{F}_{-\xi})^{-1}:\quad C_\mr{out}(S^1) \stackrel{\sim}{\ra}  \underline{C}(-\xi)$$


Note that the reduced Hamiltonian flow does not satisfy the usual semigroup law $\underline{F}_{\xi+\xi'}=\underline{F}_{\xi'}\circ \underline{F}_{\xi}$, since the range of $\uF_\xi$ and the domain of $\uF_{\xi'}$ do not match. Instead we have the following composition law. First consider flows in negative time. The map $\uF_{-\xi}: C(-\xi)/C(-\xi)^\perp\ra C_\Out(S^1)$ can be restricted to a subspace $C(-\xi-\xi')/C(-\xi)^\perp$; this restriction is an isomorphism $C(-\xi-\xi')/C(-\xi)^\perp \stackrel\sim\ra C(-\xi')$. The latter induces
an isomorphism of quotients $\uF_{-\xi,-\xi'}:\;\underline{C}(-\xi-\xi')\stackrel\sim\ra \underline{C}(-\xi')$. Then the composition law is:
\be \uF_{-\xi-\xi'}=\uF_{-\xi'}\circ \uF_{-\xi,-\xi'} \label{composition of flows neg time}\ee
In other words, we take the symplectic reduction of the 3 spaces in the upper row of the diagram
\be \label{composition of flows CD}
\begin{CD}
C(-\xi-\xi')@>F_{-\xi}|_{C(-\xi-\xi')}>> C(-\xi') @>F_{-\xi'}>> C_\Out(S^1) \\
@VVV @VVV \\
C(-\xi)@>F_{-\xi}>> C_\Out(S^1)
\end{CD}
\ee
by $C(-\xi-\xi')^\perp$ (done in two steps: reduction by $C(-\xi)^\perp$ and then by $C(-\xi-\xi')^\perp/C(-\xi)^\perp$),\; $C(-\xi')^\perp$ and $\{0\}$, 
respectively. Vertical arrows in (\ref{composition of flows CD}) are inclusions of subspaces of $\Phi_{S^1}$; composition of the two arrows in the upper row is $F_{-\xi-\xi'}$.

For the composition of reduced flows in positive time, we take the inverse of (\ref{composition of flows neg time}) and interchange $\xi\leftrightarrow\xi'$, obtaining
\be \uF_{\xi+\xi'}=\uF_{\xi',\xi}\circ \uF_{\xi} \ee
where $\uF_{\xi',\xi}=\uF_{-\xi',-\xi}^{-1}:\; \uC(-\xi)\stackrel{\sim}{\ra} \uC(-\xi-\xi')$ is the reduction of the restriction $\uF_{\xi'}|_{C(-\xi)}:\; C(-\xi)\stackrel\sim\ra C(-\xi-\xi')/C(-\xi')^\perp$ by $C(-\xi)^\perp$.

\begin{remark} The Hamiltonian (\ref{H}) descends to the symplectic reduction $\underline{C}(-\xi)$. To see this, note that one can rewrite (\ref{H}) in terms of 1-forms (\ref{coiso_alpha},\ref{coiso_beta}) as
$$H=\oint_{S^1} -\cot\left(\theta-\frac\pi4\right)\;\iota_{\dd_\theta}\alpha\cdot\alpha+ \cot\left(\theta+\frac\pi4\right)\;\iota_{\dd_\theta}\beta\cdot\beta$$
Applying this to a point $u+v\in\Phi_{S^1}$ with $u\in C(-\xi)$ and $v\in C(-\xi)^\perp$ we obtain
\begin{multline*}H(u+v)-H(u)=\\
=\oint_{S^1} -\cot\left(\theta-\frac\pi4\right)\; \iota_{\dd_\theta}(2\alpha_u+\alpha_v)\cdot \alpha_v +
\cot\left(\theta+\frac\pi4\right)\; \iota_{\dd_\theta}(2\beta_u+\beta_v)\cdot \beta_v \\
=\int_{U_-(\xi)}\underbrace{-\cot\left(\theta-\frac\pi4\right)\; \iota_{\dd_\theta}(2\alpha_u+\alpha_v)\cdot \alpha_v}_{E_-\mr{-invariant}} +
\int_{U_+(\xi)}\underbrace{\cot\left(\theta+\frac\pi4\right)\; \iota_{\dd_\theta}(2\beta_u+\beta_v)\cdot \beta_v}_{E_+\mr{-invariant}}\\
=0
\end{multline*}
Thus $H$ does indeed descend to $\uC(-\xi)$. Moreover, the Hamiltonian vector field $\check{H}$ descends to the reduction too. This follows from the explicit formulae for the action of $\check{H}$ on the 1-forms $\alpha,\beta$:
$$\check{H}\alpha=-\dd_\theta\left(\cot\left(\theta-\frac\pi4\right)\cdot \alpha\right),\qquad \check{H}\beta=-\dd_\theta\left(\cot\left(\theta+\frac\pi4\right)\cdot \beta\right)$$
which imply that for $\check{H}$ viewed as a linear map $C_\Out\ra C_\Out$,  both subspaces $C(-\xi)$ and $C(-\xi)^\perp$ are invariant.
\end{remark}


\subsubsection{Banach vs.\ Fr\'echet}\label{ss:BvF}
The impossibility to integrate the vector field $\check{H}$ into a flow on $C_\Out(S^1)$ comes from the fact that since we required from the start that fields are smooth, $\Phi_{S^1}=\CC^\infty(S^1)^{\times 2}$ is naturally equipped with Fr\'echet (but not Banach) topology and hence the
Picard--Lindel\"of theorem for existence and uniqueness of integral trajectories for $\check{H}$ does not apply. We could have chosen a different model for the space of fields from the start, e.g., setting the space of fields to be $F_{D}=\CC^2(D)$ and requiring only $\CC^2$-differentiability for the boundary $\dd D$ in case of a general domain. The phase space then is $\Phi_{\dd D}=\CC^2(\dd D)\times \CC^1(\dd D)\ni (\phi,\phi_n)$, equipped with standard Banach topology. The proof of Theorem \ref{thm} goes through in this setting without any change and, being Lagrangian, $L_D\subset \Phi_{\dd D}$ is automatically closed, and hence a Banach (complete) subspace. In this setting we can try to pass to the Hamiltonian formalism on annuli, with $H$ and $\check{H}$ still given by (\ref{H},\ref{check H}). Then proceeding with the GNH construction as above, we construct a sequence of subspaces $\Phi_{S^1}\supset C_0\supset C_1\supset\cdots$ where $C_k$ becomes a subset of $\CC^{k+3}(S^1)\times \CC^{k+2}(S^1)$ (since an application of $\check{H}$, viewed as a linear map $C_k\ra C_{k-1}$, decreases the regularity by 1 due to the derivatives appearing in (\ref{check H})), with constraints on $(2k+1)$-jets at light-like points of $S^1$, as before. In the end, the maximal subspace $C_\infty$ of $\Phi_{S^1}$, where $\check{H}$ is defined and to which it is tangent, is $C_\infty=\cap_k C_k$. Note that $C_\infty\subset \Phi_{S^1}$ is not a complete subspace (already $C_0$ is not),
hence again the Picard-Lindel\"of theorem does not apply. 
Note also that in the Banach setting $C_\infty\neq C_\Out(S^1)$ since the r.h.s., defined as in Section \ref{sec: C},
has only $\CC^2\times \CC^1$ regularity (with constraints on the 1-jets of the 1-forms $\alpha,\beta$ at light-like points, as opposed to $\infty$-jets arising in the Fr\'echet setting, cf. Proposition \ref{prop: C}, (\ref{C2})).

\subsection{Relational representation of the little 2-disks operad}
Let $\E$ be the operad of little 2-disks \cite{May}, with $\E(n)$ the configuration space of $n$ numbered disjoint (geometric) disks inside a disk of radius $1$ centered at the origin in Euclidean $\RR^2$; these configurations can be viewed as domains $D\subset \RR^2$ obtained by cutting $n$ small disks out of a unit disk. Composition $\circ_i:\E(m)\times \E(n)\ra \E(m+n-1)$ for $1\leq i\leq m$ consists in shrinking 
an element of $\E(n)$ and gluing it into an element of $\E(m)$ instead of the $i$-th disk of the latter.

Part of the data of classical field theory defined by action (\ref{S}) on Minkowski plane is the morphism of operads
\be Z:\E \ra \mr{IsoRel}(\Phi) \label{E2 map}\ee
where $\Phi=\Phi_{S^1}$ is the phase space for the unit circle in $\RR^{1,1}$ (radius and origin are in fact irrelevant due to conformal invariance). For a symplectic space $V$ we denote $\mr{IsoRel}(V)$ the operad of isotropic relations,
\begin{multline*}\mr{IsoRel}(n)=\{\underbrace{V\times\cdots\times V}_n\not\ra V\}=\\=
\{L\subset \underbrace{\bar V\times\cdots\bar V}_n\times V \; 
|\; L\; \mr{an\; isotropic\; subspace}
\}
\end{multline*}
where $\not\ra$ is the symbol for an isotropic relation, bar stands for changing the sign of symplectic form. Composition in $\mr{IsoRel}$ is the set theoretic composition of relations.
Morphism 
$Z$
sends an element of $\E(n)$, viewed as a compact domain $D\subset \RR^{1,1}$ with $n$ ``incoming'' boundary circles and one ``outgoing'' boundary circle, to the corresponding evolution relation $L_D\subset \Phi_{\dd D}\simeq\bar\Phi^{\times n}\times\Phi$, which is canonical (Lagrangian) by Theorem \ref{thm}. The fact that 
$Z$ is indeed a morphism of operads, i.e. is consistent w.r.t. the operadic composition,
is an expression of the general gluing property of classical field theory (here it
simply amounts to the fact that a function $\phi$ on a glued domain $D_1\cup D_2$ solves the wave equation iff its restrictions to $D_{1,2}$ solve the wave equation).

More generally, one can introduce a colored operad $\tilde\E$, with colors being closed curves on $\RR^{1,1}$ modulo conformal transformations and elements of $\tilde\E(n)$ being general compact domains with $n+1$ boundary components, with composition defined (when the colors match) by conformal transformation of one domain and gluing in the hole in another domain. Then we have a morphism of colored operads from $\tilde \E$ to the colored operad of isotropic relations $\Phi_{\gamma_1}\times\cdots\times\Phi_{\gamma_n}\not\ra \Phi_{\gamma_{n+1}}$ with the same set of colors: conformal classes of curves $\gamma_1,\ldots,\gamma_{n+1}$.

Note that this discussion is very general: we only used the general gluing property of field theory, conformal invariance (which is specific for dimension $\n=2$ in case of action (\ref{S})) and the fact that evolution relations are Lagrangian (and in particular isotropic).

\subsection{Free field theories and Lefschetz duality}
An abstract way to view a free classical field theory, natural from the standpoint of the Batalin--Vilkovisky formalism on manifolds with boundary \cite{CMR}, is as follows. One associates to an $\n$-manifold $M$ (possibly endowed with some geometric data, depending on the field theory model in question) a complex of vector spaces $F_M^\bt$ with differential $Q_M$ equipped with a degree $-1$ non-degenerate pairing $\omega_M^{(k)}:F_M^k\otimes F_M^{1-k}\ra\RR$, satisfying $\omega^{(k)}_M(X,Y)=\omega^{(1-k)}_M(Y,X)$ for $X,Y\in F_M$,  and to a closed $(\n-1)$-manifold $\Sigma$ a cochain complex $\Phi^\bt_\Sigma$ with differential $Q_{\dd \Sigma}$, equipped with degree $0$ symplectic structure -- a non-degenerate pairing $\omega^{(k)}_\Sigma:\Phi_\Sigma^k\otimes \Phi_\Sigma^{-k}\ra\RR$ satisfying $\omega_\Sigma^{(k)}(x,y)=-(-1)^k \omega_\Sigma^{(k)}(y,x)$ for $x,y\in\Phi_{\Sigma}$. To the inclusion of the boundary $\Sigma=\dd M\hookrightarrow M$ the field theory associates a chain projection $\pi_M:F^\bt_M\ra \Phi_{\dd M}^\bt$ intertwining the differentials $Q_M$ and $Q_{\dd M}$. The differential $Q_M$, the projection $\pi_M$ and the pairings $\omega_M$, $\omega_{\dd M}$ are required to satisfy the following coherence condition:
\be\omega_M(Q_M X, Y)-(-1)^{\deg X}\omega_M(X,Q_M Y)=\omega_{\dd M}(\pi_M(X),\pi_M(Y)) \label{Lefschetz Q skew symmetry}\ee
for $X,Y\in F_M$.

The short exact sequence
$$\ker\pi_M\hookrightarrow F_M^\bt \stackrel{\pi_M}{\twoheadrightarrow} \Phi_{\dd M}^\bt   $$
induces a long exact sequence in $Q$-cohomology:
\be \cdots\ra H^k_{Q_M}(\ker \pi_M)\ra H^k_{Q_M}\stackrel{\pi_*}{\ra} H^k_{Q_{\dd M}}\stackrel{\beta}{\ra} H^{k+1}_{Q_M}(\ker \pi_M)\ra\cdots \label{LES of rel cohom}\ee
The pairings $\omega_M$, $\omega_{\dd M}$ induce well-defined pairings on cohomology
\begin{eqnarray}
()_M:&& H^k_{Q_M}\otimes H^{1-k}_{Q_M}(\ker \pi)\ra \RR,\label{Lefschetz}\\
(,)_{\dd M}:&& H^k_{Q_{\dd M}}\otimes H^{-k}_{Q_{\dd M}}\ra\RR \label{Poincare}
\end{eqnarray}
In many cases \cite{CMR} these pairings can be proven to be non-degenerate. In particular, for abelian Chern-Simons theory, $(,)_M$ is the Lefschetz duality between de Rham cohomology of a 3-manifold and cohomology relative to the boundary, whereas $(,)_{\dd M}$ is the Poincar\'e duality for de Rham cohomology of the boundary 2-manifold.

The non-degeneracy of the pairing (\ref{Lefschetz}) in the second argument can be shown\footnote{Indeed, one has $\mr{im}(\pi_*)^\perp=\{[x]\in H^\bt_{Q_{\dd M}}\;|\; (\pi_*[Y],[x])_{\dd M}=0\;\forall\; [Y]\in H^\bt_{Q_M}\}$. Using the property $(\pi_*[Y],[x])_{\dd M}=(-1)^{\deg[Y]+1}([Y],\beta[x])_M$ following from (\ref{Lefschetz Q skew symmetry}),
we see that $\mr{im}(\pi_*)^\perp=\beta^{-1}\ker_2(,)_M$, where we denoted $\ker_2(,)_M$ the kernel of the map $H^\bt_{Q_M}(\ker\pi)\ra (H^{1-\bt}_{Q_M})^*$ induced by the pairing (\ref{Lefschetz}). Thus $\mr{im}(\pi_*)^\perp=\ker\beta=\mr{im}(\pi_*)$ if and only if $\ker_2(,)_M$ vanishes.}
 to be equivalent to the property of being Lagrangian for $\mr{im}(\pi_*)\subset H^\bt_{Q_{\dd M}}$.

In the case of the theory defined by the action (\ref{S}), the space of fields $F_M$ is a two-term complex (owing to the absence of gauge symmetry) with $F^0_M=C^\infty(M)\ni \phi$, $F^1_M=\Omega^\n(M)\ni \phi^+$, differential $Q_M:\phi\mapsto d*d \phi$ and pairing $\omega_M(\phi,\phi^+)=\int_M \phi\wedge \phi^+$.
The boundary phase space is a one-term complex $\Phi^0_{\dd M}=\Phi_{\dd M}$ with zero differential and symplectic structure $\omega_{\dd M}$ described in Section \ref{sec: free boson}. The exact sequence (\ref{LES of rel cohom}) becomes in this case
\be \label{LES free boson} 0\ra
\{\phi\;\left|\; \begin{array}{l}d*d\phi=0,\\ \pi_M(\phi)=0\end{array}\right.\}
\ra EL_M\stackrel{\pi_*=\pi_M}{\longrightarrow}\Phi_{\dd M}\stackrel{\beta}{\longrightarrow} \frac{\Omega^\n(M)}{\{d*d\phi\;|\; \pi_M(\phi)=0\}}\ra
\frac{\Omega^\n(M)}{\{d*d\phi\}}
\ra 0\ee
and $\mr{im}(\pi_*)=L_M\subset \Phi_{\dd M}$. 
Thus, whenever Conjecture \ref{Conj} holds for $M$, the ``Lefschetz duality'' (\ref{Lefschetz})
$$EL_M\otimes \frac{\Omega^n(M)}{\{d*d\phi\;|\; \pi_M(\phi)=0\}}\ra \RR$$
is non-degenerate (non-degeneracy in the first term is trivial, whereas for the second term one really needs that $L_M$ is Lagrangian). The pairing between the rightmost and the leftmost terms of (\ref{LES free boson}),
\be \frac{\Omega^\n(M)}{\{d*d\phi\}}\otimes \{\phi\;|\; d*d\phi=0,\; \pi_M(\phi)=0\}\ra \RR \label{Lefschetz bulk zero modes}\ee
is trivially non-degenerate in the second factor, whereas non-degeneracy in the first factor is non-obvious and constitutes a natural extension of Conjecture \ref{Conj}. In the case of Riemannian signature, (\ref{Lefschetz bulk zero modes}) becomes, by the Hodge--Morrey decomposition
theorem \cite{CDGM}, the pairing
$$H^\n(M)\otimes H^0(M,\dd M)\ra \RR$$
which is a special case of the standard Lefschetz duality and is indeed non-degenerate.
On the other hand, for $M$ a compact domain in the Minkowski plane as in Theorem \ref{thm}, one can easily show that both outmost terms of (\ref{LES free boson}) vanish.

\subsection{More general Lorentzian surfaces}

By inspection of its proof, Theorem \ref{thm}  generalizes straightforwardly to the case of a compact surface $M$ with smooth boundary endowed with a Lorentzian metric $g$ smooth up to the boundary, if the following conditions hold:
\begin{enumerate}[(a)]
\item \label{Lorentz a)}The two null-distributions
$\dd_+\subset TM$, $\dd_-\subset TM$ of the metric $g$ induce, as in Section \ref{sec: Evol relation and involutions}, two piecewise smooth involutions $E_\pm$ on the boundary $\dd M$ with finitely many points removed.
\item \label{Lorentz b)} For each choice of the sign $\pm$, the restriction map $C^\infty(M)^{\dd_\pm}\ra C^\infty(\dd M)^{E_\pm}$ is surjective. Here $C^\infty(M)^{\dd_\pm}$ stands for the space of smooth functions on $M$, constant along the distribution $\dd_+$ or $\dd_-$, respectively.
\item \label{Lorentz c)} The first Betti number of the cohomology of $M$ relative to the boundary vanishes, $\dim H^1(M,\dd M)=0$.
\end{enumerate}

\begin{remark}
\begin{enumerate}[i.]
\item Obviously, conditions (\ref{Lorentz a)}, \ref{Lorentz b)}, \ref{Lorentz c)}) hold if $(M,g)$ is conformally equivalent to a domain $D\subset \RR^{1,1}$ in the Minkowski plane satisfying conditions (\ref{assump A}, \ref{assump B}, \ref{assump C}) of Section \ref{sec: compact}.
\item For $M$ a domain $D\subset\RR^{1,1}$ in the Minkowski plane, condition (\ref{Lorentz b)}) is equivalent to assumption (\ref{assump C}) of Section \ref{sec: compact}, i.e.\ the assumption that lightlike points of the boundary are neither inflection, nor undulation points.
\item The presence of a family of null-curves originating at $\dd M$ and asymptotically approaching a limiting closed null-curve in $M$ spoils both conditions (\ref{Lorentz a)}) and (\ref{Lorentz b)}), see the example in Section \ref{sec: Misner}.
\item For $(M,g)$ a general Lorentzian surface, if $M'\subset M$ is a sufficiently small disk cut out of $M$, conditions (\ref{Lorentz a)},\ref{Lorentz b)},\ref{Lorentz c)}) hold for $(M',g|_{M'})$ and thus the corresponding evolution relation $L_{M'}\subset\Phi_{\dd M'}$ is Lagrangian.
\end{enumerate}
\end{remark}

\subsection{An example where $L$ is not Lagrangian: the Misner space}\label{ss-Misner}
\label{sec: Misner}
Consider the following Lorentzian manifold (the Misner space \cite{Mis}): $M=S^1\times [-1,1]$ --- a cylinder with coordinates $x\in \RR/2\pi\ZZ$, $y\in [-1,1]$ --- endowed with the  Lorentzian metric
$$g=dx\,dy-y\, dx^2$$
The corresponding null-distributions on $M$ are:
$$\dd_+=\dd_y,\quad  \dd_-= -\dd_x - y\, \dd_y$$
In particular, the ``in-boundary'' $S^1\times \{-1\}$ is spacelike and the ``out-boundary'' $S^1\times \{1\}$ is timelike. Moreover, the circle $S^1\times \{0\}$ is a leaf of the distribution $\dd_-$, i.e. a closed null-curve.

The equations for the integral curves of distributions $\dd_\pm$ (the null-curves) are
$$\frac{dx}{dy}=\frac1y,\quad \frac{dx}{dy}=0$$
for the $\dd_-$- and $\dd_+$-curves, respectively. In particular, all $\dd_-$-curves originating at either boundary circle asymptotically approach the null-cycle $S^1\times \{0\}$. On the other hand, the $\dd_+$-curves are simply the vertical lines $\{x\}\times [-1,1]$, for any $\{x\}\in S^1$.

The phase space associated to the boundary of $M$ by the construction of Section \ref{sec: free boson} is
$$\Phi_{\dd M}=\underbrace{C^\infty(S^1)\times C^\infty(S^1)}_{\Phi_{\dd_\In M}}\times \underbrace{C^\infty(S^1)\times C^\infty(S^1)}_{\Phi_{\dd_\Out M}}\ni (\phi^\In,\phi_n^\In,\phi^\Out,\phi_n^\Out)$$
where we have chosen the transversal vector field to be $n=2\dd_y-\dd_x$ at the in-boundary and $n=2\dd_y+\dd_x$ at the out-boundary. The symplectic form (\ref{presymp form}) on the phase space is
$$\omega=\oint_{S^1}dx\, (\delta \phi^\In\wedge \delta \phi^\In_n + \delta \phi^\Out\wedge \delta \phi^\Out_n) $$

For the evolution relation, consider first the ``global'' Euler-Lagrange space (in the sense of Section \ref{sec: Evol rel non-simply connected}):
$$EL^\glob=\{\phi=F+G\;\in C^\infty(M)\;|\;F,G\in C^\infty(M),\; \dd_- F=\dd_+ G=0\}$$
Since all $\dd_-$-curves asymptotically approach the single closed null-curve $S^1\times \{0\}$, the function $F$ is forced by continuity to be constant (which can be absorbed into $G$). Thus the restriction to the ``global part'' of the evolution relation is
\begin{multline*}
\label{Misner Lglob}
L^\glob=\pi(EL^\glob)=\\
=\{(\phi^\In=g(x),\;\phi_n^\In=-\dd_x g(x),\;\phi^\Out=g(x),\;\phi_n^\Out=\dd_x g(x))\quad \in \Phi_{\dd M}\;|\; g\in C^\infty(S^1)\}
\end{multline*}

The symplectic orthogonal to $L^\glob$ is readily calculated to be
\begin{multline*}
(L^\glob)^\perp=\{(\phi^\In,\;\phi^\In_n,\; \phi^\Out,\; \phi^\Out_n)\quad\in \Phi_{\dd M}\;| \\ |\; \dd_x\phi^\In(x)-\phi^\In_n(x)-\dd_x \phi^\Out(x)-\phi_n^\Out(x)=0\;\forall x\in\RR/2\pi\ZZ\}
\end{multline*}
which implies that $L^\glob$ is isotropic and
$$\dim\, (L^\glob)^\perp/L^\glob=\infty$$
(since in $(L^\glob)^\perp$ one can choose $\phi^\In,\phi_n^\In,\phi^\Out$ as independent functions, whereas in $L^\glob$ they are all expressed in terms of a single function $g$).

The true Euler--Lagrange space, where the possible multivaluedness of $F,G$ is taken into account, is given by (\ref{EL non simply connected}). In the case of the Misner geometry, $\iota_{\dd_-}\kappa=0$ implies that $\int_{S^1\times\{0\}}\kappa=0$, hence $\kappa$ defines zero cohomology class in $H^1(M)$ and therefore $\kappa$ (and hence $\lambda$ too) is exact. This implies that there is no distinction between $EL$ and $EL^\glob$ in the case at hand. Thus $L=L^\glob$ and, by the discussion above, \textit{the evolution relation $L$ is isotropic, but not Lagrangian.}

It is easy to check that also the two halves of the Misner cylinder considered above, $M_1=S^1\times [-1,0]$ and $M_2=S^1\times [0,1]$, produce non-Lagrangian evolution relations.

\thebibliography{9}
\bibitem{BPV} J. F. Barbero, J. Prieto, E. J. S. Villase\~nor, ``Hamiltonian treatment of linear field theories in the presence of
boundaries: a geometric approach,'' arXiv:1306.5854 (math-ph)
\bibitem{CDGM} S. Cappell, D. DeTurck, H. Gluck, E. Miller, ``Cohomology of harmonic forms on Riemannian
manifolds with boundary," arXiv:math/0508372 (math.DG)
\bibitem{CMR} A.~S.~Cattaneo, P.~Mn\"ev and N.~Reshetikhin,
``Classical BV theories on manifolds with boundaries,'' 
arXiv:1201.0290 (math-ph)
\bibitem{CMR2} A.~S.~Cattaneo, P.~Mn\"ev and N.~Reshetikhin,
``Classical and quantum Lagrangian field theories with boundary,''
in Proceedings of the ``Corfu Summer Institute 2011 School and Workshops on Elementary Particle Physics and Gravity,''
PoS(CORFU2011)044

\bibitem{QFT_IAS} P. Deligne, D. S. Freed, ``Classical field theory'' in  Quantum Fields and Strings: a course for mathematicians, Vol. 1, Part 1, AMS, Providence, RI Math. Soc. (1999)

\bibitem{GNH} M. Gotay, J. Nester, G. Hinds, ``Presymplectic manifolds and the Dirac-Bergmann theory of constraints,'' J. Math. Phys.
19, 2388 (1978)

\bibitem{May} J. P. May, ``The geometry of iterated loop spaces,'' Springer-Verlag (1972)

\bibitem{Mis} C. W. Misner, ``Taub-NUT space as a counterexample to almost anything,'' in Relativity Theory and Astrophysics I: Relativity and Cosmology, edited by J. Ehlers, Lectures in Applied
Mathematics, Vol.\ 8 (American Mathematical Society, Providence, 1967), p. 160.

\end{document}